\newtheorem{theorem}{Theorem}
\newtheorem{lemma}{Lemma}
\newtheorem{definition}{Definition}
\renewcommand{\int}[1]{[#1]}
\newcommand{\intz}[1]{\llbracket#1\rrbracket}
\NewDocumentCommand{\Pred}{o}{%
  \IfNoValueTF{#1}
    {\textsf{Prediction} }%
    {#1\textsf{Prediction}}%
}%
\newcommand{\N}{\mathbb{N}}
\newcommand{\Z}{\mathbb{Z}}
\newcommand{\PP}{{$\mathsf{P}$}}
\newcommand{\PC}{\PP-Complete}
\newcommand{\NC}{{$\mathsf{NC}$}}
\newcommand{\neigh}{\mathcal{N}}
\begin{document}

\begin{frontmatter}
    
\title{Complexity of the Freezing Majority Rule with L-shaped Neighborhoods}
\author[1]{Pablo Concha-Vega\corref{cor1}}\ead{pablo.concha-vega@lis-lab.fr}
\author[2,3]{Eric Goles}\ead{eric.chacc@uai.cl}
\author[2,3]{Pedro Montealegre}\ead{p.montealegre@uai.cl}
\author[1,4]{K{\'e}vin Perrot}\ead{kevin.perrot@lis-lab.fr}
\affiliation[1]{organization={Aix-Marseille Universit{\'e} Toulon, LIS, CNRS UMR 7020}, city={Marseille}, country={France}}
\affiliation[2]{organization={Facultad de Ciencias y Tecnolog{\'i}a, Universidad Adolfo Ib{\'a}{\~n}ez}, city={Santiago}, country={Chile}}
\affiliation[3]{organization={Millennium Nucleus for Social Data Science (SODAS)}, city={Santiago}, country={Chile}}
\affiliation[4]{organization={Université publique}, country={France}}
\cortext[cor1]{Corresponding author}

\begin{abstract}
    In this article we investigate the computational complexity of predicting
    two dimensional freezing majority cellular automata with states \(\{-1,+1\}\),
    where the local interactions are based on an L-shaped neighborhood structure.
    In these automata, once a cell reaches state \(+1\),
    it remains fixed in that state forever, while cells in state \(-1\) update to the most
    represented state among their neighborhoods. We consider L-shaped neighborhoods, which
    mean that the vicinity of a given cell \(c\) consists in a subset of cells in the north
    and east of \(c\).

    We focus on the prediction problem, a decision problem that involves determining
    the state of a given cell after a given number of time-steps. 
    We prove that when restricted to the simplest L-shaped neighborhood,
    consisting of the central cell and its nearest north and east neighbors, the prediction problem
    belongs to {\NC}, meaning it can be solved efficiently in parallel.
    We generalize this result for any L-shaped neighborhood of size two.
    On the other hand, for other L-shaped neighborhoods, the problem becomes {\PC},
    indicating that the problem might be inherently sequential. 
\end{abstract}

\begin{keyword}
    Cellular Automata \sep Majority Rule \sep Freezing Dynamics \sep Computational Complexity
\end{keyword}

\end{frontmatter}


\section{Introduction}

We study the computational complexity of a class of two-dimensional majority
cellular automata defined on a lattice where each node takes a binary state
from the set \(\{-1, +1\}\). The system evolves in discrete time steps, and each
node updates its state in parallel by adopting the majority value within a
prescribed neighborhood. In the event of a tie, the node retains its previous
state. Crucially, once a node switches to state \(+1\), it becomes frozen and
cannot revert. This freezing behavior captures irreversible dynamics observed
in various contexts, such as opinion formation in social networks
(voting models) and distributed consensus mechanisms
(see~\cite{dwork1988consensus,galam2007role,galam2008sociophysics,wang2014freezing}
for example).

The computational question we address is: how hard is it to predict the behavior
of such automata after a given number of steps? More precisely, given an initial
configuration and a specific node, can we efficiently determine its state after $t$
rounds of updates? This question lies at the intersection of dynamical systems and
computational complexity theory. It is known that for majority rules on
three-dimensional lattices, the prediction problem is {\PC}, implying that no
efficient parallel algorithms are likely to exist. In contrast, for one-dimensional
systems, the problem belongs to the class {\NC}, and thus admits polylogarithmic-time
parallel algorithms~\cite{moore1997majority}.

In the two-dimensional case with von Neumann neighborhoods (i.e., each node
considers its four immediate neighbors), the complexity of the
prediction problem remains unresolved. Nevertheless, {\PC}ness has been shown
for a specific two-dimensional variant involving heterogeneous majority
thresholds~\cite{concha2022complexity}. However, progress has been made in
restricted versions of the original model. For instance, it was shown
in~\cite{goles2013complexity} that for \emph{freezing} majority automata---
where the state \(+1\) remains stable--- an efficient parallel algorithm exists,
placing the problem in {\NC}.

This article contributes to this line of research by exploring the complexity of
majority dynamics under \emph{L-shaped} neighborhoods, which consist of the central
cell and a subset of its adjacent neighbors forming an ``L'' (e.g., the north and
east neighbors). We prove a dichotomy: when the neighborhood is the minimal L-shape
(central cell plus top and right neighbors), the freezing majority problem is in {\NC}.
However, when the L-shape includes more neighbors (i.e., larger L patterns),
the prediction problem becomes {\PC}.

Interestingly, the minimal L-shaped neighborhood, known as the Toom neighborhood,
also appears in the study of \emph{eraser cellular automata},
where the goal is to eliminate finite ``islands''
of active cells (state \(+1\)) over time~\cite{toom1980stable,gacs2022stable}.
In this context, the majority rule helps characterize local update rules that
ensure convergence to the all-zero configuration.
Related work in~\cite{gacs1978one} introduced a non-connected majority rule that alternates
the neighborhood depending on the current state of the node. This rule successfully
solves the majority problem in most cases but fails when the initial configuration
has near-zero magnetization (i.e., both opinions are equally represented).

\paragraph*{Our contribution.}

We first prove that the prediction problem lies in {\NC} for the Toom neighborhood
(Theorem~\ref{thm:toom_nc}), and then generalize this result to all L-shaped of
size two (Theorem~\ref{thm:size1_nc}). Next, we examine larger neighborhoods,
starting with those consisting of contiguous cells in both directions
(at least two on each side), for which the problem is {\PC}
(Theorem~\ref{thm:connected}). We also establish {\PP}-completeness for
neighborhoods of equally spaced cells (Theorem~\ref{thm:periodic});
in this case, the spacing between cells on the upper side and the right side
can differ. Finally, we show that neighborhood with two cells on each side also
lead to a {\PC} prediction problem, regardless of the distance between the cells.

\section{Preliminaries}

We use the notations $\int{n} = \{1,\dots,n\}$ and $\intz{n}=\{0,\dots,n-1\}$.
Let $\N$ be the set of positive natural numbers \(\{1,2,\dots\}\).
We define $\N_0=\N \cup \{0\}$.
We denote $V_n = \int{n}^2 \subset \N_0^2$, the grid of $n \times n$ nodes with periodic boundary conditions.

\paragraph*{Cellular Automata.}
Cellular automata are discrete dynamical systems defined on a regular grid of cells,
where each cell changes its state following a local rule which depends on the state
of the cell and the states of its neighbors.
In this work, we consider a two-dimensional grid over a torus
(periodic boundary conditions), where the cells are arranged in a rectangle
of square cells. In this model, cells can only have a finite number of states.
The states are \(-1\) and \(+1\).

A \textit{configuration} is a function \(x\) that assigns values
in \(\{-1, +1\}\) to a region of the grid \(V_n\). The value of the
cell \(u\) in the configuration \(x\) is denoted as \(x_u\). For a given
cell \(u\) we will refer to \(\neigh(u)\) as its neighborhood, which is by definition
a finite set of cells. In addition, we call
\(x_{\neigh(u)}\) the restriction of \(x\) to the neighborhood of \(u\).
For a cellular automaton, the size of the neighborhood of a cell is uniform,
\textit{i.e.}, \(|\neigh(u)|\) is the same for every cell \(u\).
Moreover, we have that \(\neigh(u) = \neigh(0) + u \pmod{n}\).
Neighborhoods are taken modulo \(n\), due to the periodic setting.

Formally, a two-dimensional \textit{cellular automaton} (CA) with states \(Q\) and local function \(f: Q^{|\neigh(u)|} \to Q\) is a map \(F: Q^{n^2} \to Q^{n^2}\) such that \(F(x)_u = f(x_{\neigh(u)})\) We call \(F\) the \textit{global function} or the \textit{global rule} of the CA. The dynamic is defined by assigning to the configuration \(x\) a new state given by the synchronous update of the local function on \(x\).

\paragraph*{Freezing majority.}
In this work, we study the freezing majority rule. In the majority cellular automata, each cell changes its state to the most represented one in its neighborhood. On the other hand, the freezing property means that a cell in state \(+1\) remains fixed in every future step.

Let us define the Freezing Majority Cellular Automata (FMCA) by the local function:

\begin{equation*}
    F(x)_u = 
    \begin{cases}
        +1 & \text{ if } x_u = +1 \text{ or } \sum_{v \in \neigh(u)} x_v > 0,\\
        -1 & \text{ otherwise.}
    \end{cases}
\end{equation*}

\paragraph*{L-shaped neighborhoods.}
We define an L-shaped neighborhood (Figure~\ref{fig:neigh_examples})
with two finite sets \(S_N,S_E \subsetneq \N \) as: 

\[
    \neigh(i,j) = \{(i,j+k), k \in S_N\} \cup \{(i+k,j), k \in S_E\}.
\]

In this fashion, we define the {\it L-shaped Freezing Majority Cellular Automata}
(LFMCA) as FMCAs with L-shaped neighborhoods.


\begin{figure}[t]
    \centering
    \includegraphics[scale = .5]{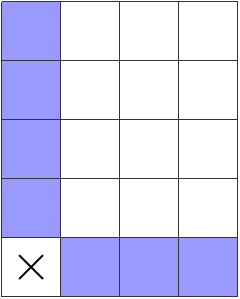}
    \qquad
    \includegraphics[scale = .5]{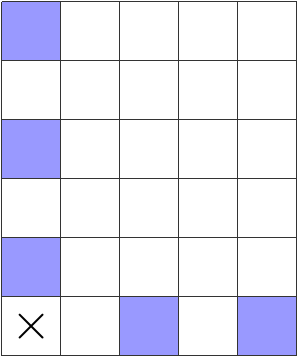}
    \qquad
    \includegraphics[scale = .5]{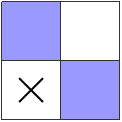}
    \qquad
    \includegraphics[scale = .5]{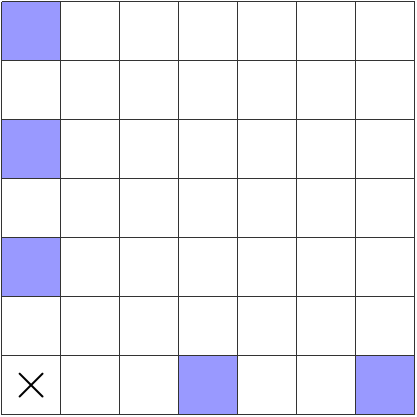}
    \caption{Examples of L-shaped neighborhoods, where the cells marked with an $\times$
    represent the central cells. From left to right, they are defined by the sets:
    $S_N = \{1,2,3,4\}$ and $S_E = \{1,2,3\}$;
    $S_N = \{1,3,5\}$ and $S_E = \{2,4\}$;
    $S_N = S_E = \{1\}$; and
    $S_N = \{2,4,6\}$ and $S_E = \{3,6\}$.
    }
    \label{fig:neigh_examples}
\end{figure}

\paragraph*{The prediction problem.}
In this paper, we focus on the prediction problem for LFMCA,
a decision problem that asks
whether the state of a given cell will change after \(t\)
time steps. We provide a precise definition below:
\newline
\newline
\noindent\fbox{%
    \parbox{.99\textwidth}{%
        \Pred[\(S_N\)-\(S_E\)-]
        \begin{itemize}
            \item {\bf Input}: \begin{itemize}
                \item an initial configuration \(x \in \{-1,1\}^{V_n}\),
                \item a time \(t \in \N\), and 
                \item a cell \((i,j) \in V_n\).
            \end{itemize}
            \item {\bf Question}: \(F^t(x)_{(i,j)} \not = x_{(i,j)}?\)
        \end{itemize}
    }
}
\vspace{0.1in} 

\paragraph*{Complexity classes.}
The computational complexity of a computational task can be defined
as the amount of resources, like time or space, needed to solve it.
One fundamental set of computational decision problems is the class
{\PP}, which is the class of problems solvable in polynomial time on
a deterministic Turing machine. The class {\PP} is informally known
as the class of problems that admit an efficient algorithm. 

Evidently, \Pred can be solved by simply simulating the dynamics of
the cellular automaton for the given number of time-steps. This is
called the \emph{trivial algorithm}. Furthermore, every initial
configuration can only reach a fixed point (by the freezing property),
and at each time step before reaching the attractor at least one cell
gets frozen on state \(+1\). Hence, the dynamics of any initial
configuration reaches an attractor in at most \(n^2\) time-steps.
Therefore, if we aim to solve the prediction problem more efficiently
than the trivial algorithm, we would have to classify it in a subclass
of {\PP}.

We consider the class {\NC}, a subclass of {\PP}, 
which contains the problems solvable in poly-logarithmic time by a PRAM,
with a polynomial number of processors.
{\NC} is considered as the class of problems which admit a fast parallel algorithm. 
It is a well known conjecture that {\NC} \(\neq\) {\PP} and, if so,
there exist ``inherently sequential''
problems that belong to {\PP} and do not belong to {\NC}.
The most likely to be inherently sequential are {\PC} problems,
to which any other problem in {\PP} can be reduced (by an
{\NC}-reduction or a logarithmic space reduction).
If any of these problems has a fast parallel algorithm,
then {\PP} \(=\) {\NC} \cite{JaJa:1992:IPA:133889,greenlaw1995limits}.

A  well known {\PC} problem is the Circuit Value Problem (CVP),
which consists in, given a Boolean circuit and a truth-value of its inputs,
to determine the value of the output.
Further, the Monotone Circuit Value Problem (MCVP), \textit{i.e.},
considering only AND and OR gates, is also {\PC}.
Intuitively, circuits are hard to compute efficiently in parallel
because in order to compute a given layer of gates, it is necessary
to know the state of the previous ones.
For a deeper understanding of circuit complexity, see 
\cite{wegener1987complexity,vollmer1999introduction,epstein2000computability}.

\section{The Complexity of the FMCA with Toom Neighborhood}

In this section, we prove that \Pred[\(\{1\}\)-\(\{1\}\)-], \emph{i.e.},
the prediction problem for the LFMCA with the Toom neighborhood, belongs to {\NC}.
We then extend this result to any L-shaped neighborhood of size \(2\).

Given a configuration \( x \in \{-1,+1\}^{V_n} \), we define the digraph
 \( G_x = (V_x, E_x) \) as follows:
\begin{itemize}
  \item The set of vertices \( V_x \subseteq V_n \) consists of all cells in state \(-1\), \emph{i.e.},

  \[V_x = \{ (i,j) \in V_n \mid x_{i,j} = -1 \}.\]
  \item There is a directed edge from \((i,j)\) to \((i',j')\) in \( E_x \) if and only if:
  
  \[  (i,j), (i',j') \in V_x \quad \text{and} \quad (i',j') \in \mathcal{N}(i,j),\]

  where \( \mathcal{N}(i,j) = \{(i+1,j), (i,j+1)\} \) corresponds to the Toom neighborhood.
\end{itemize}

The following Lemma captures
the key idea that enables an efficient parallel solution.

\begin{lemma}\label{lemma:toom_fixed}
    Given the LFMCA defined by \(S_N = S_E = \{1\}\),
    an initial configuration \(x \in \{-1,+1\}^{V_n}\),
    and the associated digraph \(G_x = (V_x, E_x)\).
    A cell \(v \in V_x\) will remain fixed in state \(-1\)
    if and only if it is in a cycle or has a path to a cycle
    of \(G_x\).
\end{lemma}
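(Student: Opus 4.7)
The plan is to translate the dynamics restricted to state $-1$ cells into a purely combinatorial statement about $G_x$. The key first observation is that, since $|\mathcal{N}(u)|=2$, a cell $u \in V_x$ transitions to $+1$ at the next step iff both its north and east neighbors are in state $+1$, which by the definition of $V_x$ and $E_x$ is equivalent to saying that $u$ has out-degree $0$ in $G_x$. Equivalently, $u$ stays in state $-1$ for one more step iff it has at least one out-neighbor in $G_x$.

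Because the dynamics is freezing, $V_{F^{t+1}(x)} \subseteq V_{F^t(x)}$, and applying the above observation iteratively shows that $G_{F^t(x)}$ is the subgraph of $G_x$ \emph{induced} by $V_{F^t(x)}$: no new edges appear and no surviving edges disappear. Consequently, the set of cells that remain fixed in state $-1$ forever coincides with $V^{\infty} = \bigcap_{t \geq 0} V_{F^t(x)}$, which is exactly the set obtained from $G_x$ by iteratively deleting every vertex of out-degree $0$. The lemma therefore reduces to the folklore digraph fact that $V^{\infty}$ equals the set $U$ of vertices of $G_x$ lying on a directed cycle or having a directed path to one.

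For the $(\Leftarrow)$ direction I would show that every $u \in U$ has at least one out-neighbor in $U$: if $u$ lies on a cycle, the next vertex along that cycle works; otherwise the first vertex on a path from $u$ to some cycle still reaches that cycle. This property is hereditary, so a straightforward induction on $t$ yields $U \subseteq V_{F^t(x)}$ for every $t \in \N_0$, hence $U \subseteq V^{\infty}$.

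For the $(\Rightarrow)$ direction I would argue the contrapositive. If $u \notin U$, no walk from $u$ reaches a cycle, so by finiteness of $V_n$ every walk starting at $u$ has bounded length; let $d(u)$ denote the maximum such length. I would then prove by induction on $d(u)$ that $u \notin V_{F^{d(u)+1}(x)}$: the base case $d(u)=0$ follows immediately from the first observation, and for $d(u)\geq 1$ every out-neighbor $v$ of $u$ satisfies $d(v) < d(u)$, so by induction each such $v$ is deleted by some step $\leq d(u)$, at which point $u$ has out-degree $0$ in the surviving induced subgraph and is deleted at the next step. The main obstacle is bookkeeping rather than conceptual: one must be careful that $d(u)$, computed once and for all in $G_x$, remains a valid bound when working in the shrinking sequence of induced subgraphs, which is precisely why the induced-subgraph observation of paragraph two is essential.
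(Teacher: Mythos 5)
Your proposal is correct and reduces the lemma to the same graph-theoretic characterization the paper uses, via the same key observation that a $-1$ cell flips exactly when both Toom neighbors are $+1$, i.e., when it has out-degree $0$ in the current graph. The only difference is presentational: you formalize the argument as iterated deletion of out-degree-$0$ vertices in induced subgraphs, proving the ``only if'' direction by contrapositive induction on the longest walk length $d(u)$, whereas the paper argues that direction directly by following a chain of persistently $-1$ neighbors until a vertex repeats and a cycle is found; your version is a more rigorous rendering of the same idea, not a different route.
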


\begin{proof}
    Let us begin by proving that if \(v\) is in a cycle of \(G_x\),
    then it remains fixed in state \(-1\). 
    First, if \(v\) is in a cycle,
    it means it has at least one neighbor in state~\(-1\) at \(t = 0\).
    We can assume that the other neighbor is in state~\(+1\),
    since this implies that cell \(v\) will change its state if and only if
    the neighbor in state \(-1\) also changes. 
    We can also extend this assumption for every node in the cycle.
    Since every node in the cycle is in state \(-1\),
    and that they depend of each other to change their states,
    none of them will change.
    Something similar occurs when \(v\) is in a path to a cycle:
    it depends on the path nodes which in turn depend of the cycle,
    then it remain fixed in state \(-1\).

    On the other hand, let us prove that if there does not exist a time-step \(t\)
    such that \(F^t(x)_v = +1\), 
    it implies that \(v\) is in a cycle or connected to a cycle of \(G_x\).
    If \(v\) is always in state \(-1\), 
    it means that at least one if its neighbors is also always in state \(-1\),
    which in turn always has a neighbor in state \(-1\), and so on.
    If we continue with this reasoning,
    at some point we will reach the ``end'' of the grid
    (remember we are considering periodical border conditions),
    and moreover, we will reach an already visited node. 
    This node can be \(v\), which means it is in a cycle.
    If it is not \(v\), it means \(v\) is in a path to the cycle.
\end{proof}

Later we will need to compute the prefix sum of an array.
Given a group \((X, +)\), the prefix sum of a sequence of elements
\(x_0,x_1,\dots, x_n \in X\) is another sequence \(y_0,y_1, \dots, y_n\),
where \(y_k = x_0 + x_1 + \dots + x_n\), with \(0 \leq k \leq n\).
J\'{a}J\'{a} \cite{JaJa:1992:IPA:133889} provides us with an efficient {\NC} 
algorithm for computing the prefix sum. It also supplies us with an efficient
algorithm for matrix multiplication. We are now ready to show the main result
of this section. 

\begin{theorem}\label{thm:toom_nc}
    \Pred[\(\{1\}\)-\(\{1\}\)-]~is in {\NC}.
\end{theorem}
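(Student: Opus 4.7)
The plan is to reduce the prediction problem to computing the \emph{freezing time} $t^{\star}(i,j)$ of the query cell --- the smallest $t$ at which its state first flips to $+1$, or $+\infty$ if it never does. Since the dynamics is freezing and states are binary, the answer to \Pred[\(\{1\}\)-\(\{1\}\)-] is YES iff $x_{(i,j)}=-1$ and $t \ge t^{\star}(i,j)$, so it suffices to compute $t^{\star}(i,j)$ in \NC.

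If $x_{(i,j)}=+1$, output NO immediately. Otherwise, build $G_x=(V_x,E_x)$ in one parallel step and use Lemma~\ref{lemma:toom_fixed} to test whether $(i,j)$ is \emph{doomed}, i.e., lies on or reaches a cycle of $G_x$. This reduces to two standard \NC{} tasks on a directed graph: cycle detection and directed reachability, both obtainable from the Boolean transitive closure of $G_x$ via $O(\log n)$ squarings of the adjacency matrix. Explicitly, compute the strongly connected components of $G_x$, flag those that contain at least one edge as cyclic, and mark every vertex that has a directed path to a flagged SCC; by the lemma the doomed set $D$ coincides with this marked set. If $(i,j)\in D$, output NO.

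Otherwise $(i,j)$ eventually freezes. Because the Toom neighborhood has size two and the rule requires the positive states to strictly outnumber the negative ones, a $-1$ cell flips exactly at the first step at which \emph{both} Toom neighbors are in state $+1$. This gives the recurrence
\[
  t^{\star}(i,j) \;=\; 1 + \max\bigl\{t^{\star}(i,j+1),\; t^{\star}(i+1,j)\bigr\},
\]
with the convention $t^{\star}(w)=0$ whenever $x_w=+1$. Restricting to $V_x\setminus D$ yields a DAG, because any out-neighbor of a non-doomed vertex must itself be non-doomed (otherwise $v$ would reach a cycle through it, contradicting $v\notin D$). Consequently $t^{\star}(i,j)$ equals one plus the length of the longest directed path from $(i,j)$ in this DAG, with sinks corresponding to $-1$ cells whose two Toom neighbors are both in state $+1$. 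Longest paths in a DAG of size $O(n^2)$ can be computed in \NC{} by $O(\log n)$ squarings of the adjacency matrix in the $(\max,+)$ semiring, augmented by the identity so that powers accumulate paths of all lengths; each such matrix product parallelises into polynomially many $\max$-of-sums, handled by the matrix multiplication and prefix-sum primitives of J\'aJ\'a. Finally, output YES iff $t \ge t^{\star}(i,j)$.

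The main difficulty is the correctness of the DAG reduction: one must check that every out-neighbor (in $G_x$) of a non-doomed $-1$ cell is either a $+1$ cell (contributing $0$ to the max) or a non-doomed $-1$ cell (for which $t^{\star}$ is finite and the recursion applies), so that the recurrence is well-defined on $V_x\setminus D$ and faithfully reproduces the freezing dynamics. Once this bridge is in place, the remaining ingredients---SCC decomposition, directed reachability, and $(\max,+)$-matrix exponentiation---are standard \NC{} primitives, yielding a poly-logarithmic parallel algorithm with polynomially many processors.
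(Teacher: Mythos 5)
Your proposal is correct, and its core coincides with the paper's own argument: both construct the digraph $G_x$, invoke Lemma~\ref{lemma:toom_fixed}, and decide in {\NC}, via transitive-closure-style matrix computations, whether the queried cell lies on or reaches a cycle --- your SCC-plus-reachability test is equivalent to the paper's check of the diagonal entries of the powers $A, A^2, \dots, A^{n^2}$ together with the auxiliary sink vertex. Where you genuinely go beyond the paper is in handling the input time $t$: the paper's proof only decides whether the cell \emph{ever} flips, and leaves implicit how that answer is to be compared with the prescribed number of steps, whereas you compute the exact freezing time $t^{\star}(i,j)$ as one plus the length of the longest path from the cell in the DAG induced on the non-doomed $-1$ cells, obtained by $O(\log n)$ squarings in the $(\max,+)$ semiring, and then answer by the comparison $t \ge t^{\star}(i,j)$. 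Your justification of this extra step is sound: with the Toom neighborhood a $-1$ cell flips exactly when both neighbors are $+1$, every out-neighbor of a non-doomed cell is itself non-doomed (otherwise the cell would reach a cycle), so the restriction to $V_x\setminus D$ is acyclic and the recurrence unrolls to the longest-path formula. In short, your route buys correctness for arbitrary $t$ at the modest cost of one $(\max,+)$ matrix exponentiation, while the paper's shorter route settles only the eventual-change question and would need exactly the refinement you supply to address the time bound explicitly.
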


\begin{proof}
    To prove it, we need to compute the graph \(G_x\) from 
    Lemma~\ref{lemma:toom_fixed} and then check whether the given node \(v\) is in
    a cycle, connected to a cycle or neither.
    Keep in mind that everything have to be done in poly-logarithmic time
    with a polynomial number of processors.

    For calculating \(G_x\) (Algorithm \ref{alg:compt_oraG}),
    we assign a processor to every node of the graph \(G\).
    If the node is in state \(+1\), the processor does nothing, otherwise,
    it will store in memory a pointer to every neighbor in state \(-1\).
    This is done in constant time using \(n^2\) processors.

    \begin{algorithm}[t]
        \caption{Computing \(G_x\).}\label{alg:compt_oraG}
        \KwIn{An initial configuration \(x\) (size \(n \times n\)).}
        \KwOut{Adjacency matrix \(A\) of graph \(G_x\) (size \(n^2 \times n^2\)).}
        \ParFor{\(1 \leq i,j \leq n\)}{
            \If{\(x_{i,j} = -1\)}{
                \If{\(x_{i+1,j} = -1\)}{
                    \(A[(i,j),(i+1,j)] = 1\)
                }
                \If{\(x_{i,j+1} = -1\)}{
                    \(A[(i,j),(i,j+1)] = 1\)
                }
            }
        }
    \end{algorithm}

    For checking whether node \(v\) is in a cycle or in a path to a cycle,
    we first compute the powers of matrix \(A\) of \(G_x\), \textit{i.e.}, \(A, A^2, \dots, A^{n^2}\),
    which can be done by mixing the prefix sum and matrix multiplication techniques
    provided by~\cite{JaJa:1992:IPA:133889}.


    To check if \(v\) is in a cycle, it is enough to read the entry \((i,j),(i,j)\)
    of every power of \(A\). This can be done in constant time (Algorithm \ref{alg:in_cycle}).

    \begin{algorithm}[t]
        \caption{Checking if \(v = (i,j)\) is in a cycle of \(G_x\).}\label{alg:in_cycle}
        \KwIn{Powers of \(A\): \(P = A, A^2, \dots, A^{n^2}\)}
        \KwOut{\(\textsf{answer = 1}\) if \(v\) is in a cycle of \(G_x\), \(\textsf{answer = 0}\) otherwise}
        \(\textsf{answer = 0}\)\\
        \ParFor{\(1 \leq k \leq n^2\)}{
            \If{\(P[k][(i,j),(i,j)] = 1\)}{
                \(\textsf{answer = 1}\)
            }
        }
    \end{algorithm}

    Finally, to decide whether \(v\) is in a path to a cycle, we use the following technique.
    Let us take the adjacency matrix \(A\) of graph \(G_x\), 
    and add a new node named \(u\), such that every node in a cycle points to it.
    Let us call this new matrix \(B\).
    We can compute \(B\) by parallel executing the Algorithm \ref{alg:in_cycle} on every node.
    Next, we calculate the first \(n^2+1\) powers of \(B\).
    Finally, it is enough to check if there exists a \(k \leq n^2+1\) such that
    \(B^k[v,u] = 1\). The latter can be done similarly to Algorithm~\ref{alg:in_cycle}.
\end{proof}

In the following results, we prove that Theorem~\ref{thm:toom_nc}
can be extended to arbitrary sets $S_N$ and $S_E$ of size $1$.

First, let us define the concept of \textit{independent subgrid}.

\begin{definition}[Independent subgrid]
    Given a CA over \(G = (V_n,E)\), where the directed edges \(E\)
    are determined by a neighborhood function \(\neigh\). \(G'\) is
    an independent subgrid of \(G\) if and only if it is a
    maximal size induced subgraph of \(G\).
\end{definition}

Note that a CA with more than one independent subgrid will have independent
subdynamics, therefore solving \Pred would depend only on one of these subdynamics.
We state the following Lemma in a general way as it is also used later.
Remark that the existence of subgrids also depends on the size of the finite square grid $V_n$,
because the latter has periodic boundary conditions
(for example with $S_N=S_E=\{2\}$ and $n=7$, the neighborhood relationship on $V_7$ is isomorphism to Toom's neighborhood on $V_7$).

\begin{lemma}\label{lemma:neigh_equiv}
    Given an LFMCA defined with sets \(S_E\) and \(S_N\) such that
    \begin{itemize}
        \item \((\exists p \in \N)(\forall a \in S_E): a \equiv 0 \mod p\), and 
        \item \((\exists p' \in \N)(\forall b \in S_N): b \equiv 0 \mod p'\)
    \end{itemize}
    and a size $n\in\N$,
    then for every \((i,j) \in \intz{\gcd(p,n)} \times \intz{\gcd(p',n)}\)
    we have independent subgrids composed by the vertices (coordinates are considered modulo $n$)
    \[
      V_{(i,j)} = \{(i+qp , j+q'p') \in V_n \mid q,q' \in \N_0\}.
    \]
\end{lemma}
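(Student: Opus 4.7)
The plan is to establish two complementary facts about the family $\{V_{(i,j)} : (i,j) \in \intz{\gcd(p,n)} \times \intz{\gcd(p',n)}\}$: each set is closed under the neighborhood relation, and together they partition $V_n$. These two facts immediately identify the $V_{(i,j)}$'s as the maximal pieces of $V_n$ whose induced subgraphs have no edges to the rest of the grid, which is precisely what is meant by independent subgrids.

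For closure, I would pick an arbitrary cell $(x,y) = (i+qp,\, j+q'p') \in V_{(i,j)}$ and check its neighbors. Every $a \in S_E$ can be written $a = kp$ by hypothesis, so the east neighbor $(x+a,y) = (i+(q+k)p,\, j+q'p')$ is again of the required form and hence lies in $V_{(i,j)}$. The symmetric argument with $b \in S_N$ of the form $b = k'p'$ handles the north neighbors. This shows $\neigh((x,y)) \subseteq V_{(i,j)}$, so the induced subgraph on $V_{(i,j)}$ has no edges toward other cells.

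For the partition property, the heart of the argument is the number-theoretic fact that the additive subgroup $\langle p \rangle$ of $\Z/n\Z$ coincides with $\gcd(p,n)\,\Z/n\Z$ (by Bezout's identity). Consequently, the orbit $\{i+qp \bmod n : q \in \N_0\}$ equals the coset $i + \gcd(p,n)\,\Z/n\Z$, and as $i$ ranges over $\intz{\gcd(p,n)}$ one obtains exactly one representative from each of the $\gcd(p,n)$ distinct cosets. Applying the same reasoning on the second coordinate with $p'$, every $(x,y) \in V_n$ belongs to exactly one $V_{(i,j)}$ with $(i,j) \in \intz{\gcd(p,n)} \times \intz{\gcd(p',n)}$.

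Combining the two parts, no edge of the neighborhood graph crosses from a cell of $V_{(i,j)}$ to a cell of $V_{(i',j')}$ when $(i,j) \neq (i',j')$; since the $V_{(i,j)}$ partition $V_n$, each induced subgraph is a maximal component of the grid disconnected from the rest, which is what it means to be an independent subgrid. The main subtlety, and essentially the only place where one has to pay attention, is the use of $\gcd(p,n)$ rather than $p$ in the indexing: the periodic boundary conditions can wrap a chain of $p$-steps back to a position shifted by $\gcd(p,n)$, so the partition into orbits is finer than one might naively expect, and matches exactly the cardinality $\gcd(p,n)\cdot\gcd(p',n)$ suggested by the index set.
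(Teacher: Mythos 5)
Your proposal is correct and follows essentially the same route as the paper: show the sets $V_{(i,j)}$ partition $V_n$ via the Bezout/gcd fact that the multiples of $p$ modulo $n$ are exactly the multiples of $\gcd(p,n)$, and observe that divisibility of the neighborhood offsets by $p$ and $p'$ keeps every neighborhood inside its part. Your coset phrasing is just a cleaner packaging of the paper's explicit Bezout computation, so no substantive difference.
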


\begin{proof}
    We prove that the \(V_{(i,j)}\) for \((i,j) \in \intz{\gcd(p,n)} \times \intz{\gcd(p',n)}\)
    form a partition of the grid $V_n$.
    First, they do not intersect, because if $i\neq i'$ for some $i>i'$
    (the case $j\neq j'$ for the second coordinate is analogous)
    then we have $i+qp \not\equiv i'+q'p \mod n$ for all $q,q'\in\N_0$.
    Indeed, otherwise it should hold that $i-i'\equiv (q'-q)p \mod n$, but this is impossible because $1\leq i-i'<\gcd(p,n)$.

    Second, their union is $V_n$, because for any $0\leq i'<n$ we have $i'=i+qp\mod n$
    for some $q\in\N_0$ and $i\in\intz{\gcd(p,n)}$
    (similarly for any second coordinate $0\leq j'<n$).
    Indeed, by Bezout's identity there are $x,y\in\Z$ such that $px+ny=\gcd(p,n)$,
    hence $px\equiv\gcd(p,n)\mod n$ (Property~$\star$),
    and since $p(x+n)\equiv\gcd(p,n)\mod n$ we can assume $x\in\N_0$ with Property~$\star$.
    Let $r=\lceil\frac{i'}{\gcd(p,n)}\rceil$ and $s=i'-r\gcd(p,n)$,
    we have $i'-rxp \equiv s\mod n$,
    therefore we conclude that there are $q=rx\in\N_0$ and $i=s\in\intz{\gcd(p,n)}$
    verifying the claim.

    Moreover, by construction all the neighborhood relationships are contained in the parts
    (\emph{i.e.}~if cell $(i,j)\in V_{(s,t)}$ is neighbor of cell $(i',j')\in V_{(s',t')}$ then $(s,t)=(s',t')$).
\end{proof}

\begin{theorem}\label{thm:size1_nc}
    The \Pred[\(S_N\)-\(S_E\)-] problem is in {\NC} for all sets
    $S_N$ and $S_E$ such that $|S_N| = |S_E| = 1$.
\end{theorem}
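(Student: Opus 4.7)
The plan is to reduce \Pred[\(S_N\)-\(S_E\)-] with $S_N=\{a\}$ and $S_E=\{b\}$ to \Pred[\(\{1\}\)-\(\{1\}\)-] via Lemma~\ref{lemma:neigh_equiv}. Applying that lemma with $p=b$ and $p'=a$ partitions $V_n$ into independent subgrids $V_{(r,s)}$ indexed by $(r,s)\in\intz{\gcd(b,n)}\times\intz{\gcd(a,n)}$. Because distinct subgrids do not interact dynamically, the value $F^t(x)_{(i,j)}$ depends only on the restriction of the initial configuration $x$ to the unique subgrid containing $(i,j)$.

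The next step is to exhibit an explicit isomorphism between each subgrid and the Toom LFMCA on a smaller (rectangular) torus. We consider the map
\[
    \varphi_{(r,s)} : (q_1,q_2) \mapsto (r+q_1 b,\, s+q_2 a) \bmod n,
\]
defined for $(q_1,q_2) \in \intz{n/\gcd(b,n)} \times \intz{n/\gcd(a,n)}$. Mimicking the counting argument in the proof of Lemma~\ref{lemma:neigh_equiv}, $\varphi_{(r,s)}$ is a bijection onto $V_{(r,s)}$. Moreover, the two neighbors of $\varphi_{(r,s)}(q_1,q_2)$ in $V_{(r,s)}$ pull back via $\varphi_{(r,s)}^{-1}$ to $(q_1,q_2+1)$ and $(q_1+1,q_2)$, which is exactly the Toom neighborhood. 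The freezing majority rule therefore transports along $\varphi_{(r,s)}$ unchanged, so the dynamics on $V_{(r,s)}$ is conjugate to the Toom LFMCA on a torus of size $(n/\gcd(b,n))\times(n/\gcd(a,n))$.

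The NC reduction is then immediate. From the input $(x,t,(i,j))$, the indices $r = i \bmod \gcd(b,n)$ and $s = j \bmod \gcd(a,n)$ identify the relevant subgrid, and the reduced configuration $y_{(q_1,q_2)} = x_{\varphi_{(r,s)}(q_1,q_2)}$ together with the pulled-back query cell can be assembled in $O(1)$ parallel time with a polynomial number of processors. Feeding this instance to the algorithm of Theorem~\ref{thm:toom_nc}---which only exploits the digraph structure $G_x$ and hence extends verbatim to rectangular tori with Toom-like neighborhood $\{(q_1+1,q_2),(q_1,q_2+1)\}$---answers the original query. The main obstacle is the verification that $\varphi_{(r,s)}$ is a dynamics-preserving bijection; this is essentially already established by Lemma~\ref{lemma:neigh_equiv}, so only the neighbor-preservation part requires a short direct check.
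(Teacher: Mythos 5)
Your proposal is correct and follows essentially the same route as the paper: apply Lemma~\ref{lemma:neigh_equiv} to split $V_n$ into independent subgrids, locate the queried cell's subgrid via the residues modulo $\gcd(k_E,n)$ and $\gcd(k_N,n)$, and solve the resulting Toom-type instance with the algorithm of Theorem~\ref{thm:toom_nc}. Your write-up is in fact somewhat more explicit than the paper's, since you spell out the conjugating bijection $\varphi_{(r,s)}$ and note that the Toom algorithm carries over to rectangular tori, details the paper leaves implicit.
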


\begin{proof}





    The proof follows directly from Theorem~\ref{thm:toom_nc} and
    Lemma~\ref{lemma:neigh_equiv}.

    In the partitions of this LFMCA for any size $n\in\N$, each cell has one neighbor to the top and
    another to the right, similar to the Toom neighborhood.
    The latter implies that, after identifying in which partition the objective cell is,
    the \Pred problem can be solved as we proved in Theorem~\ref{thm:toom_nc}.
    
    Let $S_N=\{k_N\}$ and $S_E=\{k_E\}$.
    In order to identify in which partition a given cell $(i,j)\in \Z^2$ belongs,
    it is enough to calculate the values of the expressions
    \[
      s = i \mod gcd(k_E,n)
      \quad\text{ and }\quad
      t = j \mod gcd(k_N,n)
    \]
    then cell $(i,j) \in V_{(s,t)}$.
\end{proof}

\section{Hardness of the General L-shaped Neighborhoods} 

In this Section we study L-shaped neighborhoods of greater size.
We do not only study the obvious generalization of the Toom neighborhood
consisting in two lines of consecutive cells pointing north and east,
but also two other families within our definition of L-shaped neighborhood,
all of them having a {\PC} \Pred problem. These two families are
1) the L-shaped neighborhoods  where the elements of \(S_N\) and \(S_E\)
are equally spaced and 2) the L-shaped neighborhoods defined by \(S_N\)
and \(S_E\) such that \(|S_N| = |S_E| = 2\).

\subsection{Contiguous and Periodic L-Shaped Neighborhoods}

Our next result focuses on the natural generalization of the Toom neighborhood,
\emph{i.e.}, the neighborhoods defined with arbitrary numbers of subsequent
cells in both directions.

\begin{theorem}\label{thm:connected}
    The \Pred[\(S_N\)-\(S_E\)-] problem is {\PC} for all sets
    \(S_E = \int{k_E}\) and \(S_N = \int{k_N}\) such that \(k_E,k_N \ge 2\).
\end{theorem}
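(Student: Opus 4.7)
My plan is to prove $\PP$-hardness by a logarithmic-space reduction from the Monotone Circuit Value Problem (MCVP), whose planar version is known to be $\PP$-complete; membership in $\PP$ follows from the trivial simulation already discussed in the preliminaries. Given a planar monotone Boolean circuit $C$ with an input assignment and a distinguished output gate, I would construct, in logarithmic space, an initial configuration $x$ on $V_n$ with $n$ polynomial in $|C|$, a time $t$, and a cell $v$, so that $C$ evaluates to $1$ iff $F^t(x)_v = +1$ (equivalently, iff $v$ flips from $-1$). Since the neighborhood of $(i,j)$ lies strictly to its north and east, information in the dynamics flows toward the south-west, so I would lay out the inputs of $C$ near the north-east corner of the grid and the output cell near the south-west corner.

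For the gadget design, the key arithmetic observation is that, with neighborhood of size $N = k_N + k_E$, a cell currently in state $-1$ flips to $+1$ exactly when strictly more than $N/2$ of its $N$ neighbors are $+1$. By pre-fixing $f$ of those neighbors to frozen $+1$ (using cells initialized to $+1$, which therefore stay $+1$ forever), the cell behaves as a threshold-$(\lfloor N/2 \rfloor + 1 - f)$ gate on the remaining $N - f$ ``live'' neighbors. Tuning $f$ and restricting attention to exactly two live inputs, I would realize both an AND and an OR of two incoming signals. Wires would be diagonal chains of $-1$ cells padded on the north and east by frozen $+1$ walls of controlled thickness, so that a $+1$ front advances one cell per time-step toward the south-west. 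The same primitive yields corners, fan-out splitters (to reuse a signal at several gates), and zig-zag delay segments that equalize arrival times at multi-input gates. Large insulating fields of $-1$ cells that are disconnected from any $+1$ source keep separated gadgets from influencing each other, and choosing $n$ larger than the bounding box of the embedded circuit avoids interactions through the periodic boundary.

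The principal obstacle I anticipate is not the high-level reduction but the explicit, uniform gadget design for every pair $(k_N,k_E)$ with $k_N,k_E \geq 2$. The widths of wires, the thicknesses of walls, and the geometry of delay loops must all scale with $N$, and the parity of $N$ affects the exact number of frozen $+1$ neighbors needed to convert a threshold gate into AND versus OR. Particular care is needed because $S_N = \int{k_N}$ and $S_E = \int{k_E}$ are contiguous, so a single cell simultaneously sees many nearby cells and gadgets tend to leak signals to their neighbors unless the insulating walls are sufficiently thick and precisely positioned. Once these parametric gadgets are specified and verified on a few local patterns, the remaining checks---that the reduction is log-space computable, that $n$ and $t$ are polynomially bounded in $|C|$, and that signal timings are globally synchronized---are routine.
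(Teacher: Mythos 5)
There is a genuine gap, and it sits at the very foundation of your reduction: you propose to reduce from the \emph{planar} monotone circuit value problem and assert that it is {\PC}. It is not. While CVP, monotone CVP, and planar CVP (with negations) are all {\PC}, the monotone \emph{and} planar circuit value problem is known to be in {\NC} (Yang; Delcher and Kosaraju; and later extensions covering the general planar monotone case). So a log-space reduction from planar MCVP, even if all your gadgets work perfectly, would not establish {\PP}-hardness of \Pred[\(S_N\)-\(S_E\)-]. To reduce from the genuinely {\PC} problem --- general (non-planar) MCVP --- you must embed a circuit whose wires cross, and your gadget list (wires, corners, AND, OR, fan-out, delays) contains no crossover. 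You also cannot sidestep this by switching to planar CVP with NOT gates: the freezing majority dynamics is monotone (turning any initial \(-1\) into \(+1\) can only increase the set of cells that ever reach \(+1\)), so the map from the encoded input bits to the state of the output cell is monotone and a NOT gadget is impossible in principle. This is exactly why the paper's proof reduces from general MCVP and devotes an explicit gadget to the crossover, designed so that the north input reaches only the south output and the east input only the west output (degenerating harmlessly into a conjunction when both signals are TRUE, which is acceptable for monotone routing).

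Apart from this missing ingredient, your high-level plan matches the paper's: reduction from MCVP, threshold arithmetic (a \(-1\) cell flips when strictly more than \((k_N+k_E)/2\) neighbors are \(+1\)), pre-frozen \(+1\) cells to bias thresholds into AND/OR behavior, insulating regions of \(-1\), and a grid large enough to avoid wrap-around effects. Two smaller remarks: the timing-equalization machinery you anticipate is unnecessary, since the dynamics is freezing and monotone, an AND gadget simply waits for its slower input, and the paper explicitly exploits that no signal synchronization is needed; and the uniform-in-\((k_N,k_E)\) gadget design that you correctly flag as the main technical burden is precisely where the paper invests its effort (wire widths and spacings expressed in terms of \(a=\lfloor (k_E+k_N)/2\rfloor\) and related quantities, with the parity of \(k_E+k_N\) handled explicitly), so that part of your outline is on the right track but remains unproved as written. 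The fix is to replace the planarity assumption by an explicit crossover gadget and reduce from general MCVP.
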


\begin{proof}
    To establish the result, we reduce from the MCVP. The reduction encodes the input
    and the components of a Boolean circuit into an initial configuration of the LFMCA,
    which depends on the sets \(S_E\) and \(S_N\).
    Without loss of generality, we assume \(k_E \ge k_N\).
    
    Following the approach described in \cite{goles2018complexity}, it is enough to
    construct a fixed set of \textit{gadgets}, namely \textit{wires}, conjunction gates,
    disjunction gates, a crossover and signal duplicators. Each gadget have to be of
    constant size and designed with exactly two inputs and two outputs, with inputs
    located on the east and north side and outputs on the west and south side.
    Moreover, whenever two gadgets are placed adjacent to each other, an output port
    of one must align with an input port of the other. We adopt this strategy and
    construct gadgets that satisfy all these requirements.

    First, we explain how to simulate \textit{wires}, \textit{i.e.}, gadgets that
    allow us to propagate signals through the grid. Since \(k_E\) and \(k_N\) are not
    necessarily equal, the vertical and horizontal wires may differ in shape.
    Hence, we will see both cases separately.
    As shown in Figure \ref{fig:vert_connected_wires}, vertical wires are constructed
    using \(a = \left\lfloor \frac{k_E + k_N}{2} \right\rfloor\) columns of cells in
    state \(+1\), the signal passes through $c+1$ columns located \(b = k_E - a\) columns
    to the left, where

    \[
        c = 
        \begin{cases}
            b+1 & \text{ if } k_E + k_N \text{ is even,}\\
            b & \text{ otherwise.}
        \end{cases}
    \]

    To encode a TRUE signal, the first row of the signal columns must be in state \(+1\).

    \begin{figure}[t]
        \centerline{
        \includegraphics[scale = .3]{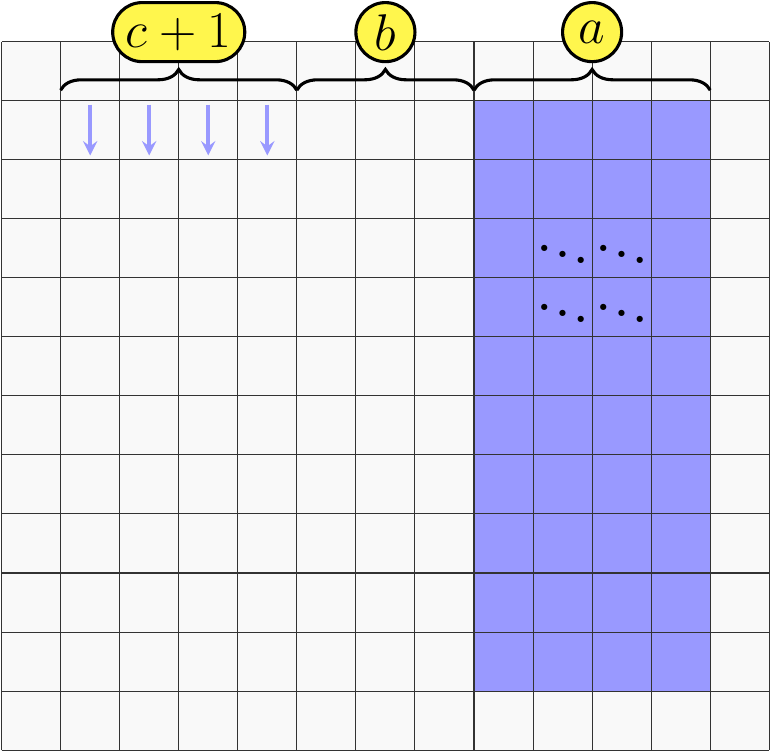}\qquad
        \includegraphics[scale = .3]{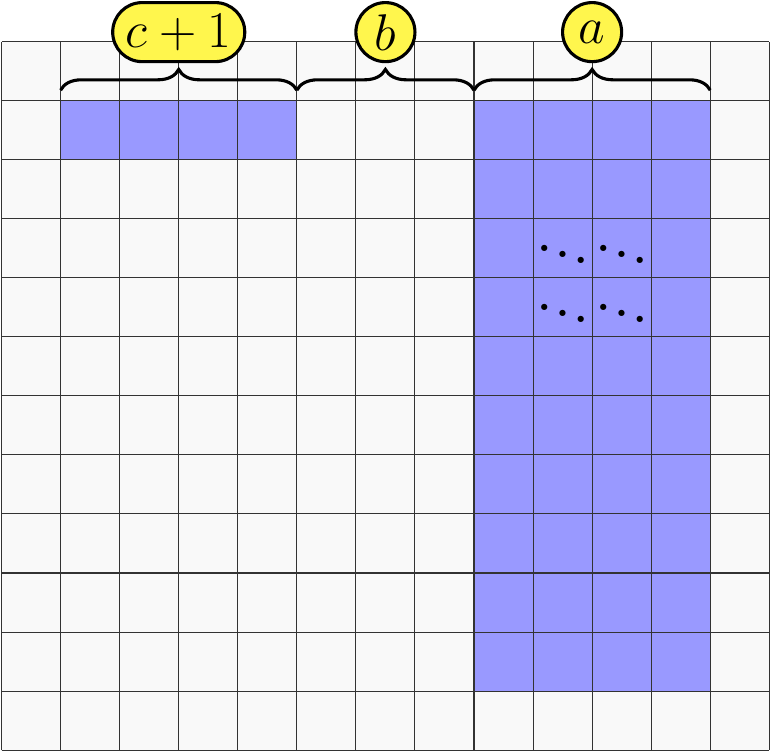}
        }
        \vspace{1em}
        \centerline{
        \includegraphics[scale = .3]{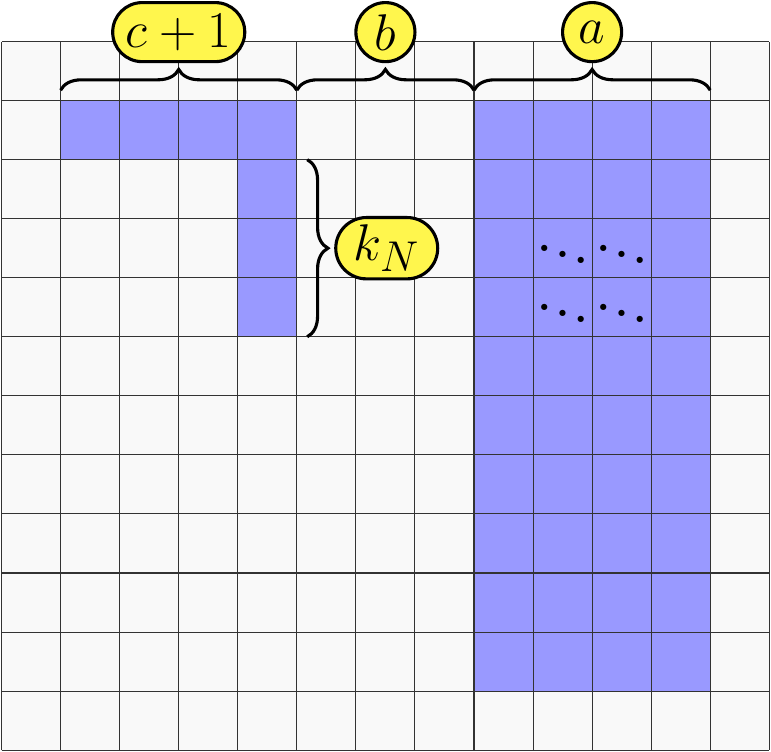}\qquad
        \includegraphics[scale = .3]{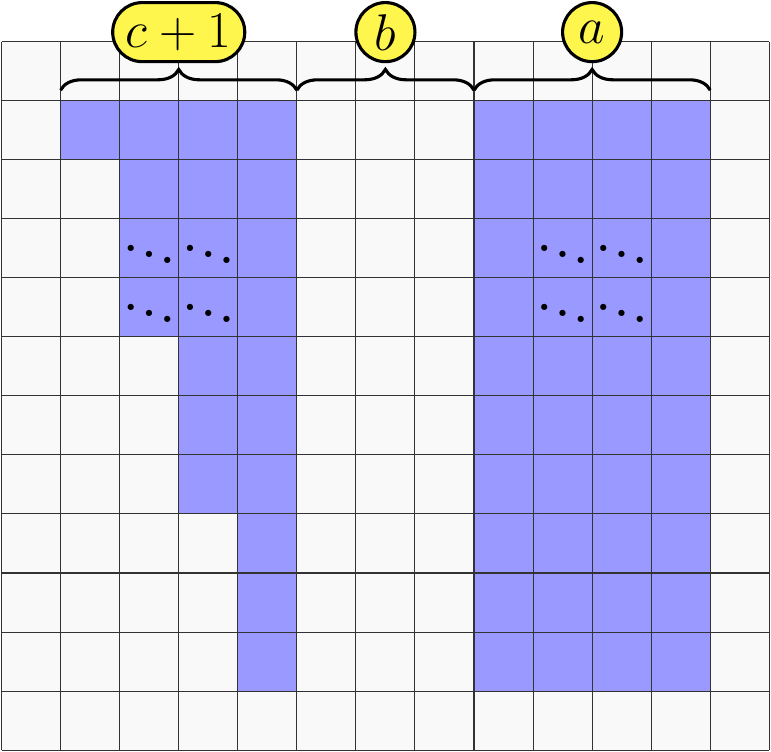}
        }
        \caption{Vertical wires. \textbf{Top left:} Wire initialized with a FALSE signal.
        The arrow indicates the cell that must be in state \(+1\) to \textit{activate} the signal.
        \textbf{Top right:} Wire initialized with a TRUE signal. \textbf{Bottom left:} Wire initialized with a TRUE signal after one step. \textbf{Bottom right:} Wire with a TRUE signal after several steps.}
        \label{fig:vert_connected_wires}
    \end{figure}

    On the other hand, horizontal wires (see Figure \ref{fig:hor_connected_wires}) 
    are composed of \(k_N\) rows in state \(+1\), 
    with the signal located in the row immediately below.
    Horizontal signals have width $2$, which helps with the construction of
    the other gadgets.
    A TRUE signal is initialized by setting the first \(c\) cells of both signal
    rows to state \(+1\).

    Note that the horizontal wires cannot occupy more than \(a\) columns, 
    otherwise undesired horizontal spreading occurs across all the rows they use.
    However, they can be placed up to \(k_E + k_N - a - 2\) 
    columns apart and still function properly as wires without this undesired
    row-wise contamination.

    \begin{figure}[t]
        \centerline{
        \includegraphics[scale = .3]{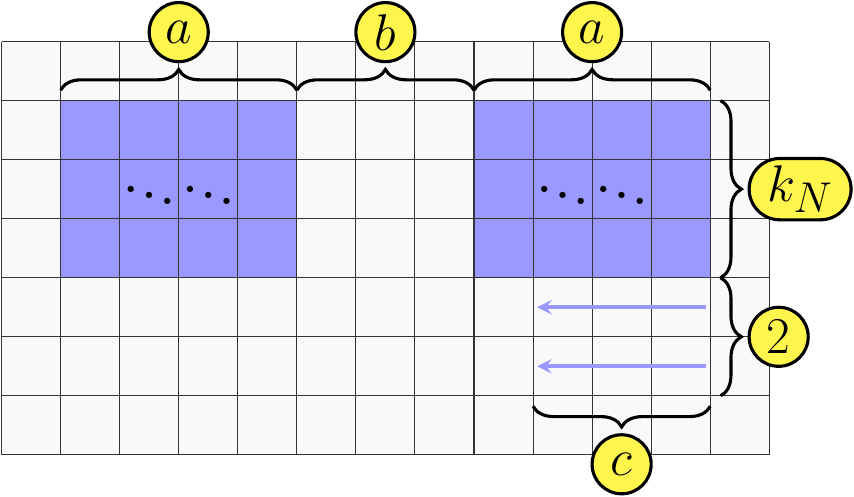}\qquad
        \includegraphics[scale = .3]{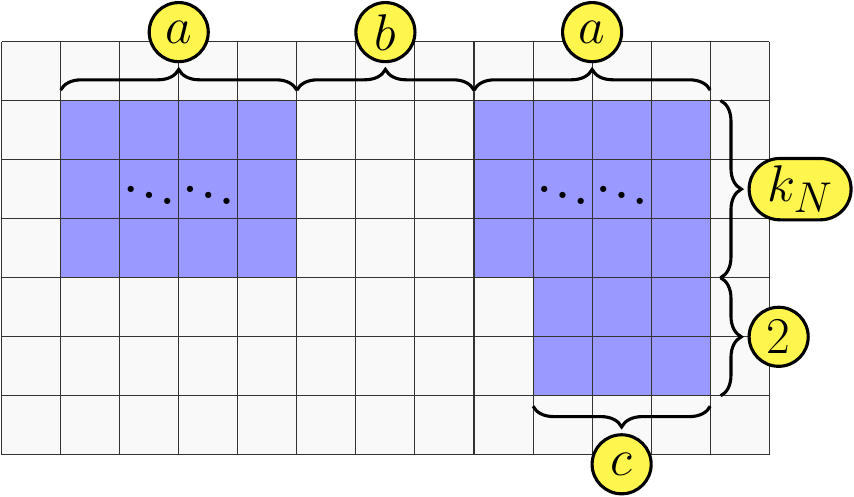}
        }
        \vspace{1em}
        \centerline{
        \includegraphics[scale = .3]{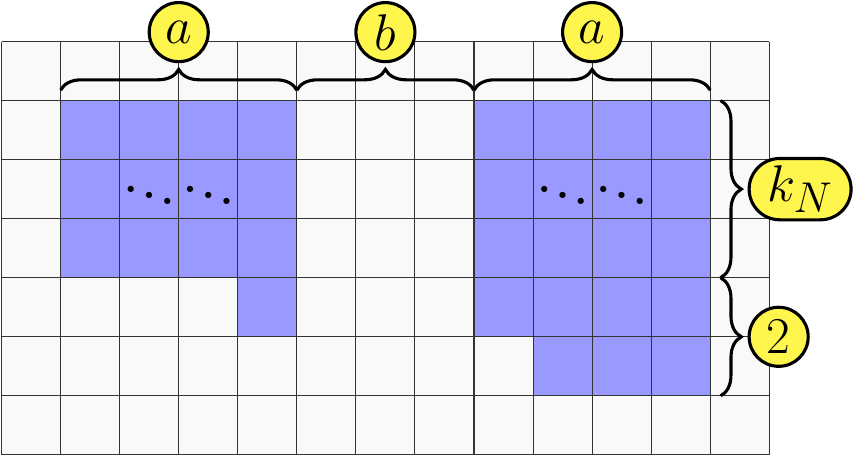}\qquad
        \includegraphics[scale = .3]{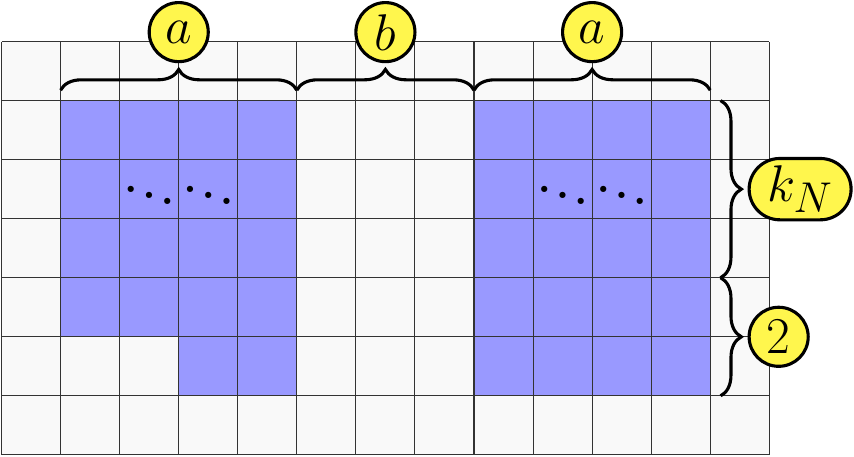}
        }
        \caption{Horizontal wires.
        \textbf{Top left:} Wire initialized with a FALSE signal,
        the arrows indicates the cells that must be in state \(+1\) in order to \textit{activate} the signal.
        \textbf{Top right:} Wire initialized with a TRUE signal.
        \textbf{Bottom left:} Wire initialized with a TRUE signal after one step.
        \textbf{Bottom right:} Wire with a TRUE signal after some steps.} 
        \label{fig:hor_connected_wires}
    \end{figure}

    Next, we show how to simulate logic gates. Since we focus exclusively on monotone circuits, 
    we only need to simulate conjunction and disjunction gates with fan-in \(2\) and fan-out equal to \(2\).
    The conjunction and disjunction gadgets are illustrated in 
    Figure \ref{fig:connected_and}. 
    A conjunction gate outputs TRUE only when both inputs are TRUE, whereas the
    disjunction gate outputs TRUE if at least one input is TRUE.
    Disjunction gates can also act as signal duplicators.

    \begin{figure}[t]
        \centering
        \includegraphics[scale = .3]{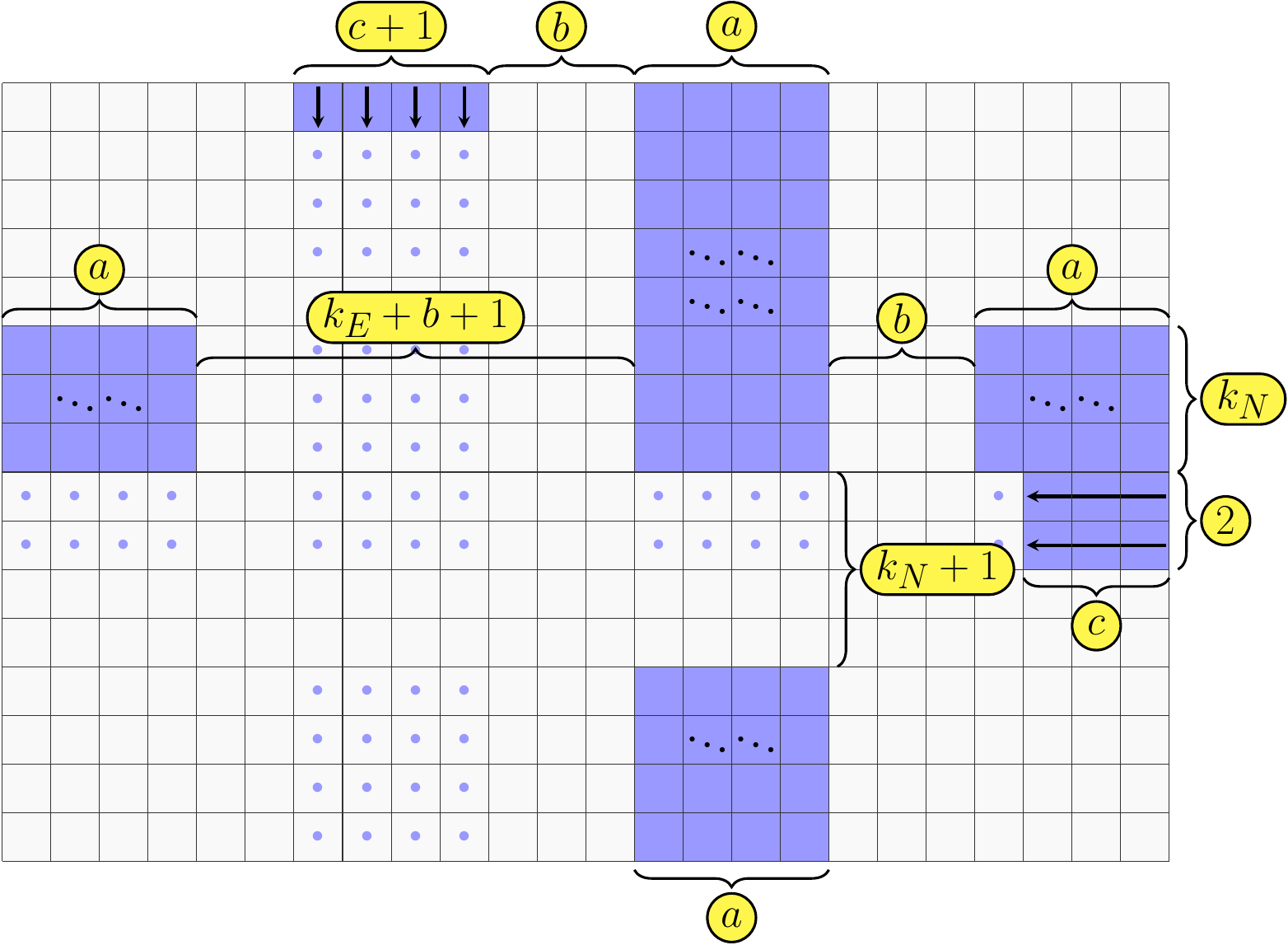}\\[1em]
        \includegraphics[scale = .3]{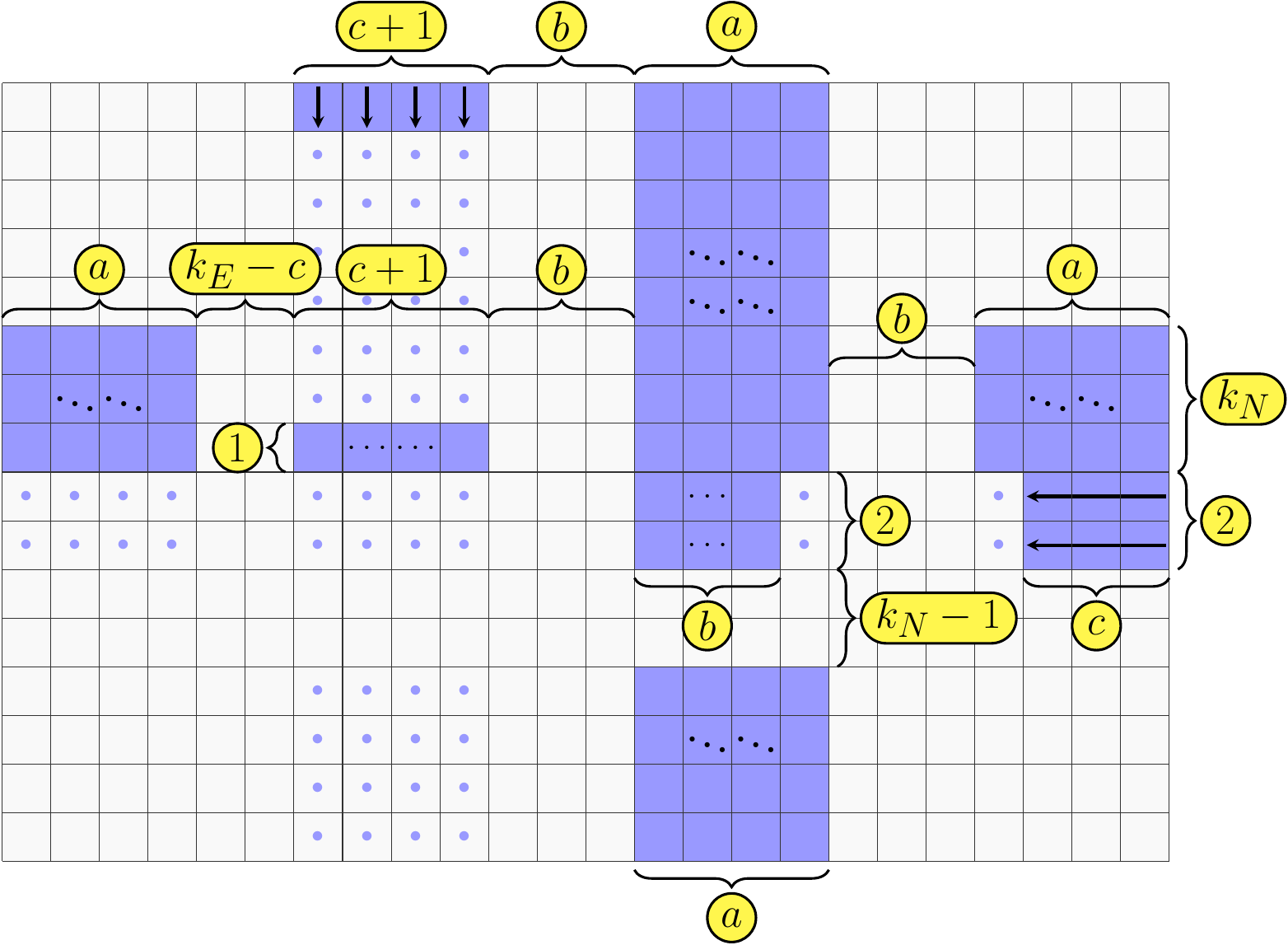}
        \caption{
          Top: AND gadget for contiguous L-shaped neighborhoods.
          Bottom: OR gadget for contiguous L-shaped neighborhoods.
          The cells that eventually go to the freezing state $+1$ are dotted.
        }
        \label{fig:connected_and}
        \label{fig:connected_or}
    \end{figure}

    \begin{figure}[t]
        \centering
        \includegraphics[scale = .3]{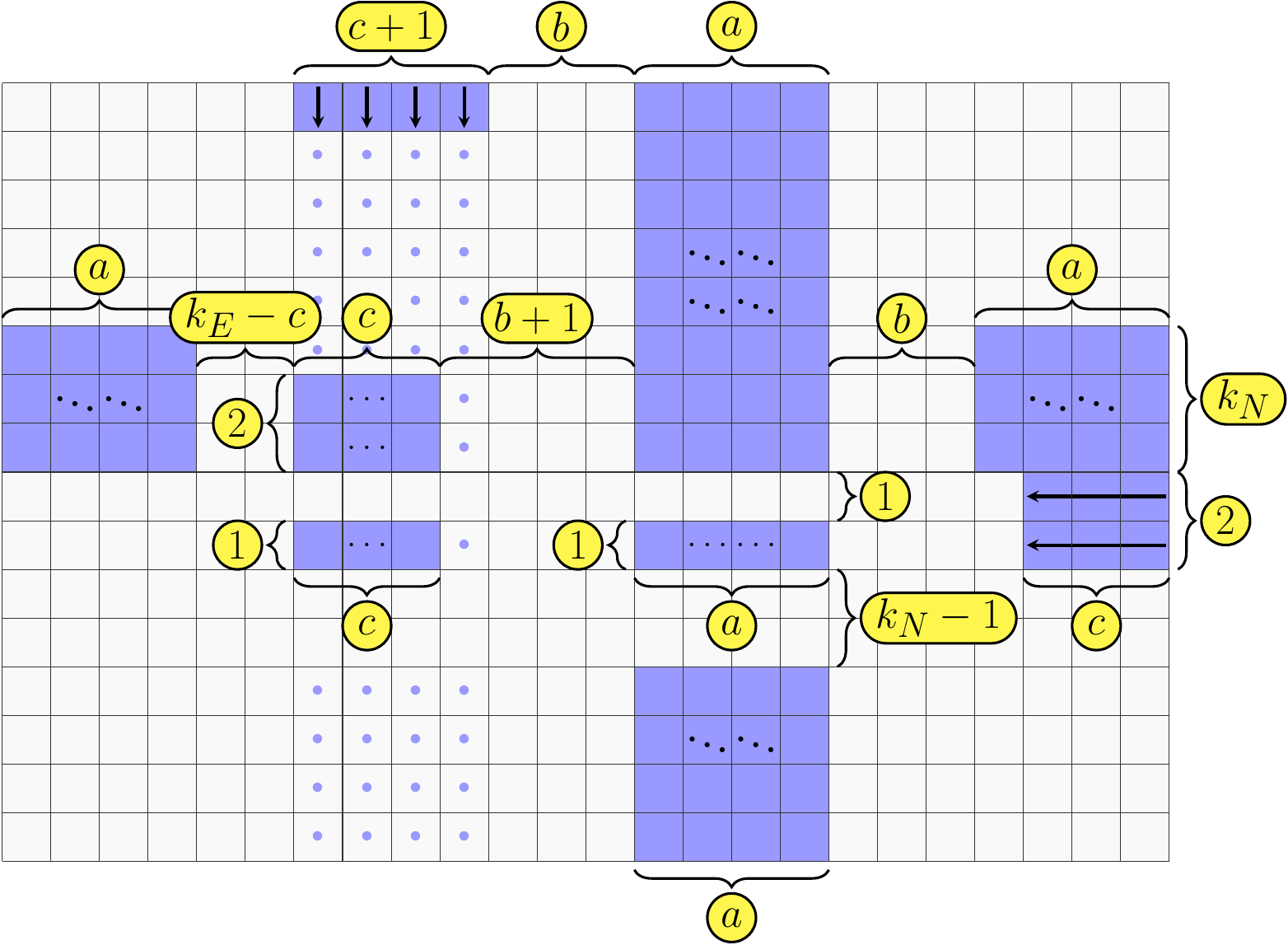}\\[1em]
        \includegraphics[scale = .3]{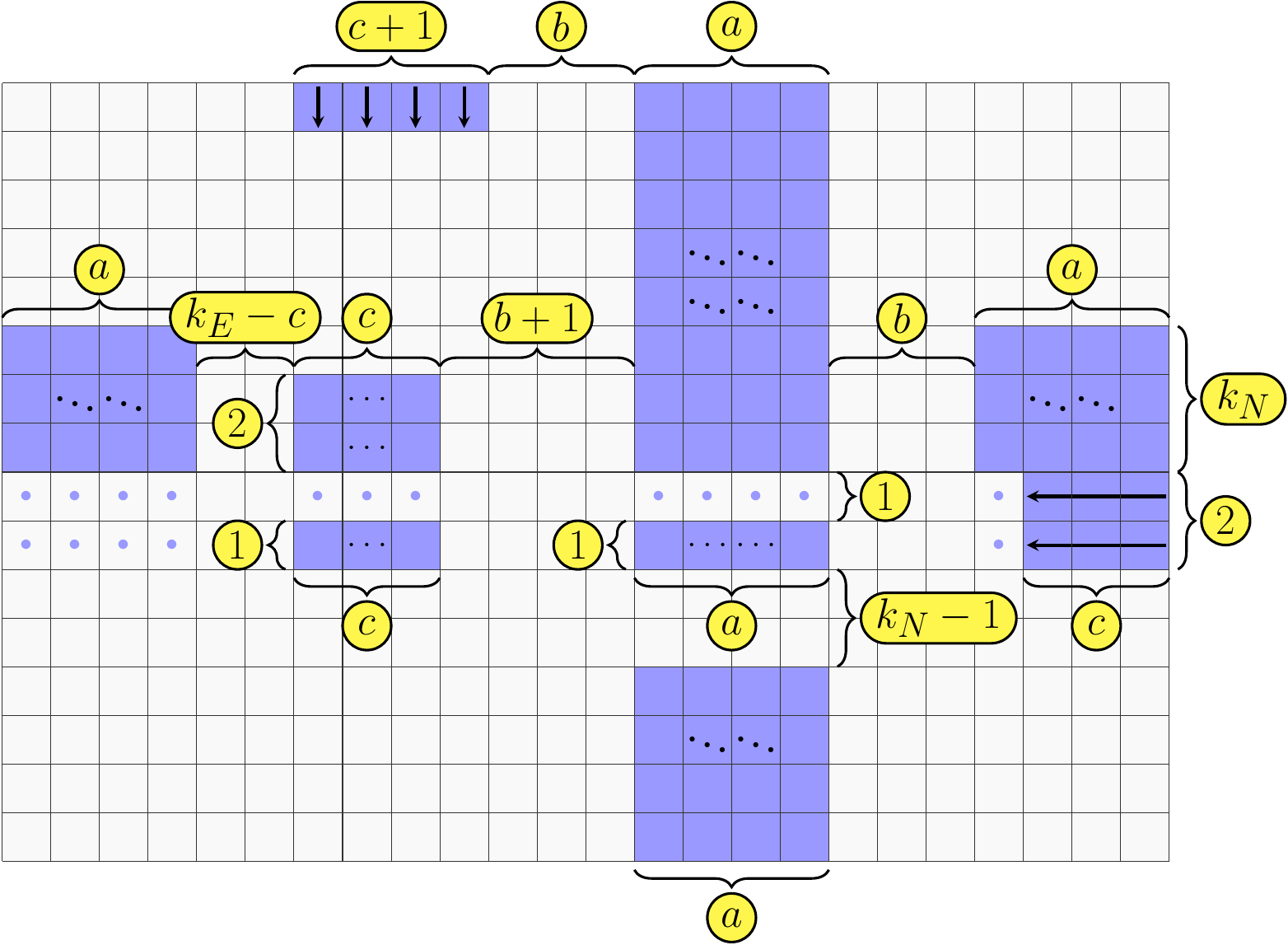}
        \caption{
          Crossover gadget for contiguous L-shaped neighborhoods.
          Top: from north to south.
          Bottom: from west to east.
          The cells that eventually go to the freezing state $+1$ are dotted.
        }
        \label{fig:connected_co}
    \end{figure}

    Finally, to simulate arbitrary circuit, signal crossings are necessary.
    To achieve this, we introduce a \textit{crossover} gadget (Figure \ref{fig:connected_co})
    that allows the east signal to pass exclusively through the west output,
    and the north input signal to pass exclusively through the south output.
    When both signals are TRUE, the crossover gadget behaves as a conjunction gate. 
    Importantly, no prior signal coordination is required.
\end{proof}

Similarly as with Theorem~\ref{thm:toom_nc} and~\ref{thm:size1_nc}, we generalize
Theorem~\ref{thm:connected} to LFMCA with independent subgrids.

\begin{theorem}\label{thm:periodic}
    Let \(S_E\) and \(S_N\) be sets satisfying:
    \begin{itemize}
        \item \(|S_E| \ge 2\),
        \item \(|S_N| \ge 2\),
        \item \((\exists p \in \N): S_E=\{p,2p,\dots,|S_E|p\}\), and 
        \item \((\exists p' \in \N): S_N=\{p',2p',\dots,|S_N|p'\}\), 
    \end{itemize}
    then the \Pred[\(S_N\)-\(S_E\)-] problem is {\PC}.
\end{theorem}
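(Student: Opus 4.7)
The plan is to mimic the strategy of Theorem~\ref{thm:size1_nc}: apply Lemma~\ref{lemma:neigh_equiv} to decompose the grid into independent subgrids, each of which behaves like the contiguous L-shape case, and then transfer the hardness of Theorem~\ref{thm:connected}. Under the hypotheses of the theorem, both $S_E$ and $S_N$ satisfy the divisibility condition of the lemma, with moduli $p$ and $p'$, so for every $n \in \N$ the torus $V_n$ partitions into independent subgrids $V_{(i,j)}$. Using the intrinsic coordinates $(q,q')$ on $V_{(0,0)}$ given by the map $(q,q') \mapsto (qp, q'p')$, the neighbours of a cell are exactly $\{(q,q'+k) : k \in \int{|S_N|}\} \cup \{(q+k,q') : k \in \int{|S_E|}\}$. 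Hence the dynamics restricted to $V_{(0,0)}$ is precisely that of the contiguous LFMCA of Theorem~\ref{thm:connected} (with $k_N = |S_N| \geq 2$ and $k_E = |S_E| \geq 2$), played on a torus of dimensions $(n/\gcd(p,n)) \times (n/\gcd(p',n))$.

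The log-space reduction from MCVP I would build is then direct. Given an MCVP instance $C$, I first invoke Theorem~\ref{thm:connected}'s construction to obtain, on an $m \times m$ torus, an initial configuration $y$, a time $t$, and a target cell $(a,b)$ encoding the circuit output, with $m$ polynomial in $|C|$. I pick $n = mpp'$, so that $\gcd(p,n) = p$, $\gcd(p',n) = p'$, and $V_{(0,0)}$ has dimensions $mp' \times mp \geq m \times m$. The initial configuration $x \in \{-1,+1\}^{V_n}$ is then obtained by embedding $y$ into the sub-rectangle $\{(qp, q'p') : 0 \leq q, q' < m\}$ of $V_{(0,0)}$, padding the remaining cells of $V_{(0,0)}$ inertly (see below), and setting cells outside $V_{(0,0)}$ to any value, since independence of the subgrids rules out any interference. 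The reduction outputs the triple $(x, t, (ap, bp'))$; correctness is immediate because the dynamics on $V_{(0,0)}$ replays exactly the contiguous L-shape simulation of $C$.

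The main obstacle, modest in scope, is the inert padding of the extra rows and columns of $V_{(0,0)}$, whose dimensions exceed (and need not match) $m \times m$. This should follow from the modularity of the gadgets of Theorem~\ref{thm:connected}: signal wires have constant width and, when carrying a FALSE signal, keep their support cells frozen at $+1$ and their signal cells at $-1$ throughout the entire simulation. I would therefore extend the peripheral wires of the circuit layout, or insert idle FALSE-carrying wires, to tile the remaining region without perturbing the embedded circuit. Since every step of the transformation (computing $n$, locating the embedding, placing the padding, and mapping the target cell) is computable in logarithmic space in $|C|$, the whole reduction is a valid {\NC}/log-space reduction from MCVP, proving \Pred[\(S_N\)-\(S_E\)-] to be {\PC}.
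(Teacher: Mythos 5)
Your proposal follows essentially the same route as the paper: apply Lemma~\ref{lemma:neigh_equiv} to split the torus into independent subgrids on which the neighborhood becomes contiguous, then embed the MCVP reduction of Theorem~\ref{thm:connected} in one subgrid with a suitable choice of $n$. The paper's own proof is just this, stated more tersely; your added care about the exact choice of $n$, the inert padding, and log-space computability only fills in details the paper leaves implicit.
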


\begin{proof}
    We can apply Lemma~\ref{lemma:neigh_equiv} (with the $p$ and $p'$ from this statement),
    and the construction of Theorem~\ref{thm:connected}
    on any of the subgrids because in it the neighborhood is contiguous
    (provided that the size $n$ is large enough, one can even embed $pp'$
    different circuit simulations in the grid $V_n$,
    but for the purpose of this construction is it sufficient to take $n$ to be a multiple of $p$).
\end{proof}

\subsection{Non-Contiguous L-Shaped Neighborhoods with Set Size \(2\)}

Our focus now shifts to the L-shaped neighborhoods defined by \(S_N\)
and \(S_E\) such that \(|S_N| = |S_E| = 2\). These neighborhoods are
particularly significant due to their minimal size. To the best of
our knowledge, they are the smallest L-shaped neighborhoods for which
the prediction problem is \PC.

\begin{theorem}\label{thm:non_connected}
    Let \(S_E = \{i_E, j_E\}\) and \(S_N = \{i_N, j_N\}\) be sets such that
    the integers \(i_E,j_E,i_N,j_N\) satisfy:


    \begin{itemize}
        \item $j_N \neq 2i_N$,
        \item $j_E \neq 2i_E$,
        \item \(0 < i_E < j_E-1\), and
        \item \(0 < i_N < j_N-1\).
    \end{itemize}
    Then, the \Pred[\(S_N\)-\(S_E\)-] problem is {\PC}.

\end{theorem}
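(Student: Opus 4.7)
The plan is to reduce from the Monotone Circuit Value Problem (MCVP) following the gadget-based strategy of Theorem~\ref{thm:connected}: for any fixed tuple $(i_E,j_E,i_N,j_N)$ satisfying the hypotheses, I would construct a constant-size library of gadgets---vertical and horizontal wires, AND and OR gates (the latter doubling as signal duplicators), and a crossover---each a rectangle whose input ports sit on the east and north sides and whose output ports sit on the west and south sides, so that tiling them together simulates any monotone Boolean circuit. Since the neighborhood has exactly four cells (two on each arm), the freezing rule flips a $-1$ cell to $+1$ precisely when at least three of its four neighbors are $+1$, and this rigid $3$-out-of-$4$ threshold is what the entire construction exploits.

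The first step is to lay out a background of frozen $+1$ cells whose spacing is chosen from the offsets $i_E,j_E,i_N,j_N$ so that ``bulk'' $-1$ cells see at most two $+1$ neighbors and remain inert forever, while designated ``signal'' cells see exactly two $+1$ background neighbors and can be tipped by one extra $+1$ arriving along a wire. The hypotheses $j_E \neq 2i_E$ and $j_N \neq 2i_N$ guarantee that we are not in the periodic arithmetic-progression situation already handled by Theorem~\ref{thm:periodic}, while the strict inequalities $i_E < j_E-1$ and $i_N < j_N-1$ leave at least one empty cell between the two neighbor positions on each arm, which is crucial to place barrier cells that isolate adjacent wires from one another. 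Once the background is fixed, wires are thin columns (resp.\ rows) of $-1$ cells whose tipping is triggered only when the signal arrives from the intended direction, and AND and OR gates are obtained by locally adjusting the number of background $+1$ cells around a switching cell so that two wire inputs (for AND) or one wire input (for OR) suffices to reach the threshold of three. The correctness of each of these gadgets can be verified by a finite simulation inside the bounded rectangle it occupies, which I would check parameter by parameter.

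The main obstacle, as in every such {\PC}ness proof for cellular automata, will be the crossover: a constant-size region that routes a north-to-south signal past an east-to-west signal without crosstalk. The non-contiguous neighborhoods are a double-edged sword here, because the two distinct offsets on each arm give us extra room to decouple the two channels with well-placed barriers, but they also let a signal on one wire ``see through'' a single row of barriers and potentially influence a wire that would be safely shielded in the contiguous setting of Theorem~\ref{thm:connected}. I therefore expect the construction of the crossover to split into a small number of sub-cases according to the parities and relative sizes of $j_E-i_E$ and $j_N-i_N$, with a dedicated layout verified for each. Once every gadget is validated, composing them into an initial configuration that embeds an arbitrary MCVP instance and reading off the output after a polynomial number of steps is routine, exactly as in the proof of Theorem~\ref{thm:connected}.
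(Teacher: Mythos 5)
Your high-level strategy coincides with the paper's: a reduction from MCVP in the style of Theorem~\ref{thm:connected}, built from wire, AND, OR (doubling as duplicator) and crossover gadgets over a frozen background, exploiting the fact that with four neighbors the freezing rule is a $3$-out-of-$4$ threshold. But the proposal stops exactly where the content of this theorem begins: you never exhibit a single gadget. The entire burden of the proof is an explicit, parameter-uniform family of layouts that works for \emph{every} admissible tuple $(i_E,j_E,i_N,j_N)$ --- in the paper these are the concrete configurations of Figures~\ref{fig:non_connected_and} and~\ref{fig:non_connected_co}, all scaled by a single common constant $k$, together with the port discipline (a TRUE signal is activated by one $+1$ cell placed immediately north of a signal column or immediately east of a signal row) that makes adjacent gadgets composable. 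You acknowledge this yourself by deferring the crossover, which you correctly identify as the crux, to an unspecified case analysis (``a small number of sub-cases \dots with a dedicated layout verified for each'') and by promising to check the other gadgets ``parameter by parameter.'' A plan to build gadgets plus a promise to verify them by finite simulation is not a proof of a statement quantified over infinitely many neighborhood parameters; one needs either explicit parametrized layouts with an argument covering all admissible parameters, or at least a finite, exhaustive case split with a construction in each case.

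A second, more specific gap concerns the hypotheses $j_E \neq 2i_E$ and $j_N \neq 2i_N$. You read them as mere bookkeeping to avoid overlap with Theorem~\ref{thm:periodic}, and your sketch never uses them. That cannot be right as a reading of the proof obligation: they are stated as conditions of this theorem, so a proof must either invoke them somewhere in the gadget verification (presumably to rule out unwanted coincidences where the two offsets on an arm align background cells with signal cells, as happens when one offset is twice the other) or show they are superfluous --- which would be a stronger claim needing its own justification. As written, your construction plan is insensitive to these conditions, so you have no way to distinguish the parameter regime where your gadgets are supposed to work from the excluded one, which is a symptom that the decisive combinatorial work has not been done.
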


\begin{proof}
    As in Theorem~\ref{thm:connected}, we reduce from the MCVP.
    To accomplish this, we construct conjunction, disjunction and crossover gadgets,
    illustrated respectively in Figures 
    \ref{fig:non_connected_and} and \ref{fig:non_connected_co}. 
    
    Note that to initialize a signal as TRUE, 
    it suffices for one cell to be in state \(+1\).
    It is crucial that this cell is positioned immediately to the north (for signal columns)
    or east (for signal rows);
    otherwise, the gadget will malfunction.
    This requirement enables placing gadgets adjacent to one another
    (side by side, above, or below).

    Finally, note that the size of each gadget depends on a fixed value \(k > 0\),
    which has to be consistent across all gadgets.
\end{proof}

\begin{figure}[t]
    \centering
    \includegraphics[scale = .3]{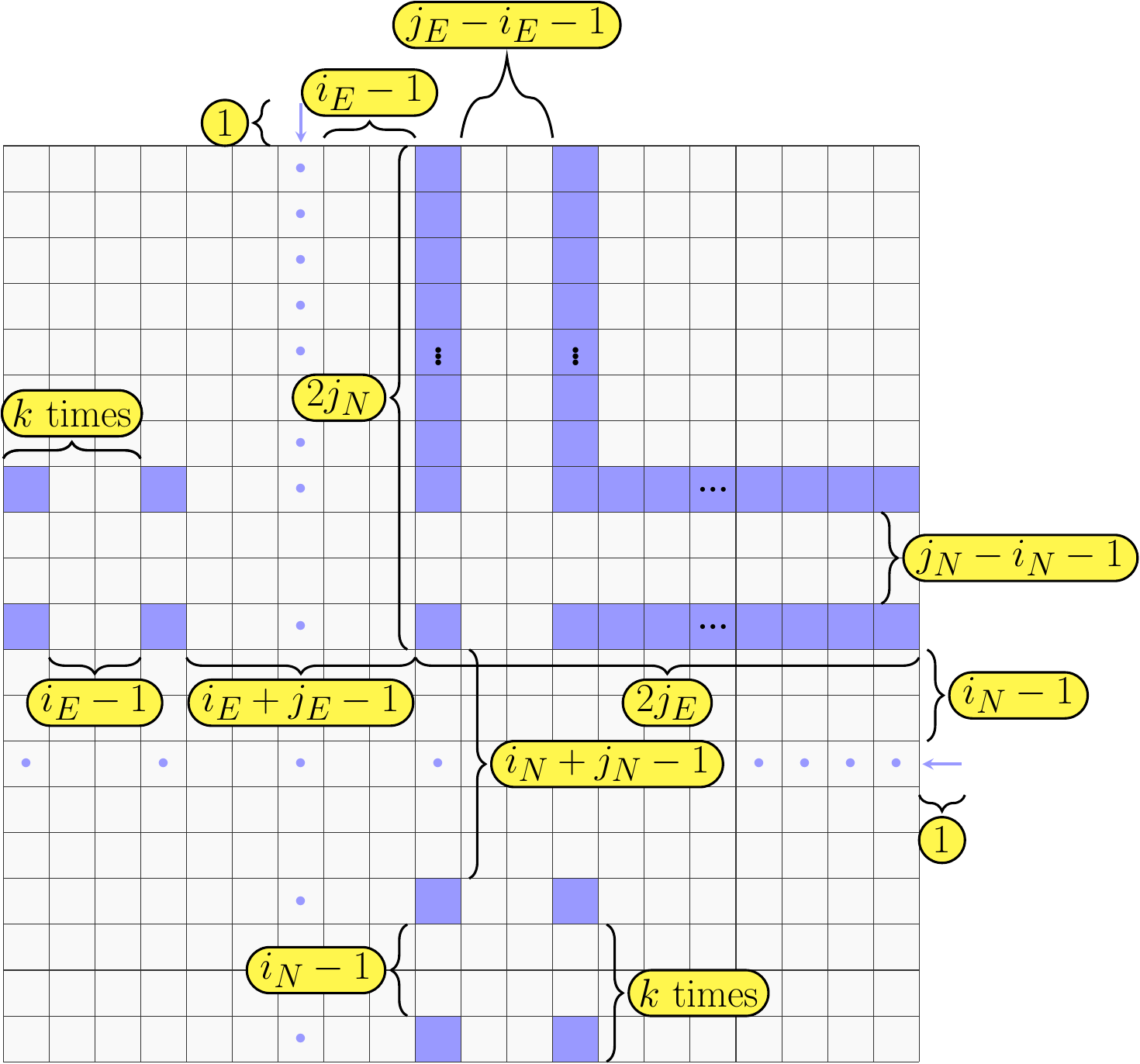}\\[1em]
    \includegraphics[scale = .3]{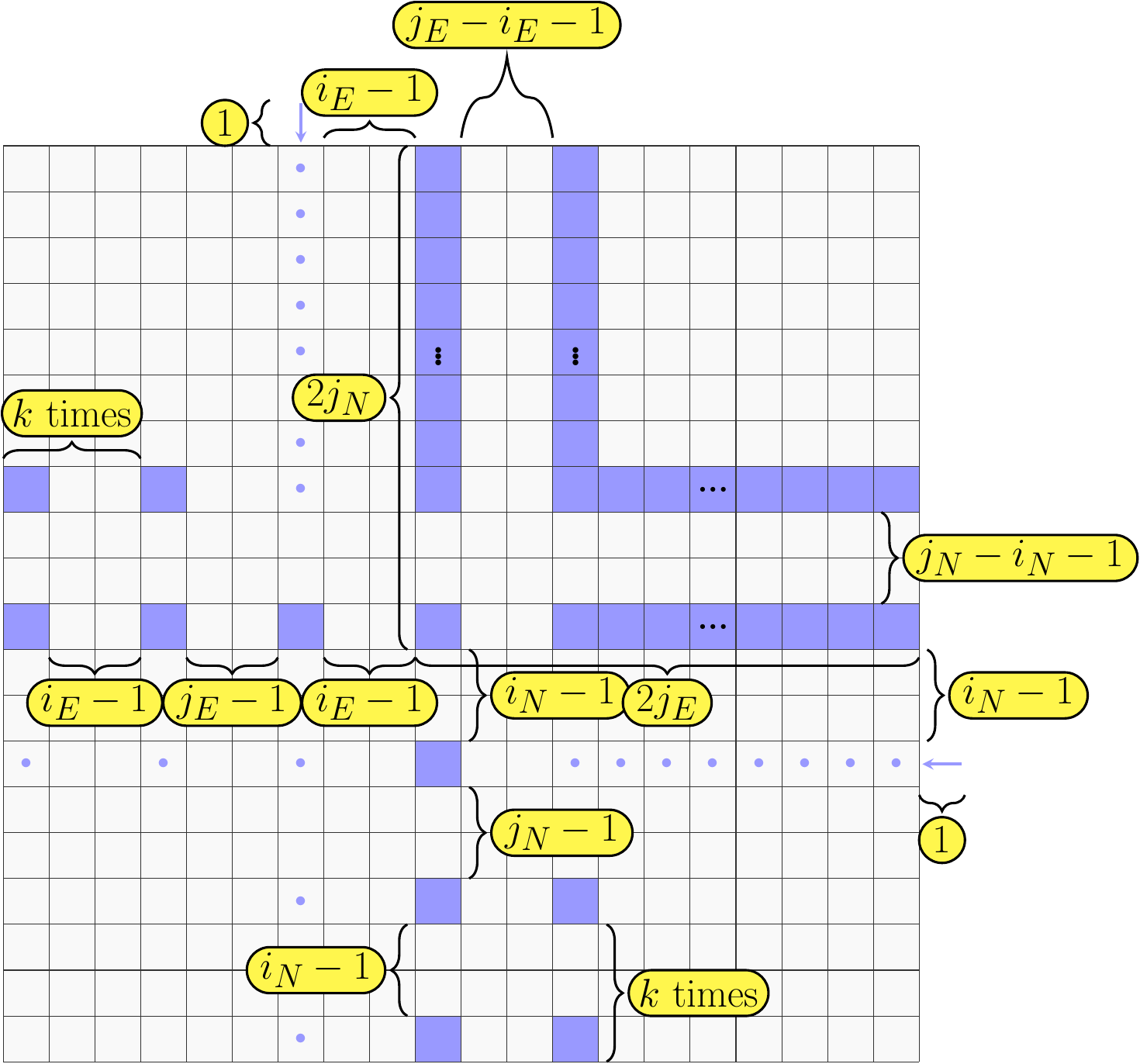}
    \caption{
      Top: AND gadget for non-contiguous L-shaped neighborhoods with \(|S_E| = |S_N| = 2\).
      Bottom: OR gadget for the same neighborhoods.
      The cells that eventually go to the freezing state $+1$ are dotted.
    }
    \label{fig:non_connected_and}
    \label{fig:non_connected_or}
\end{figure}

\begin{figure}[t]
    \centering
    \includegraphics[scale = .3]{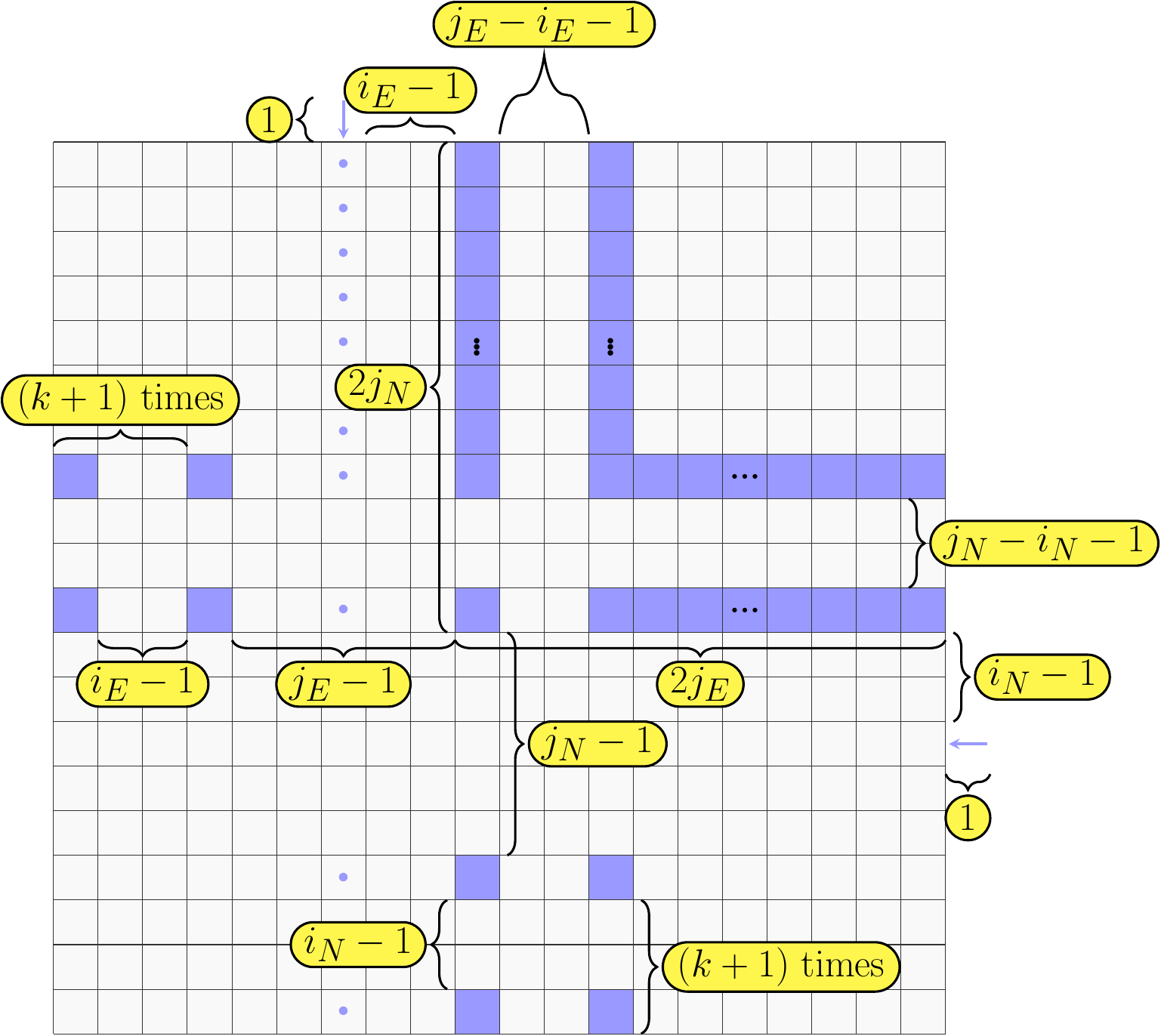}\\[1em]
    \includegraphics[scale = .3]{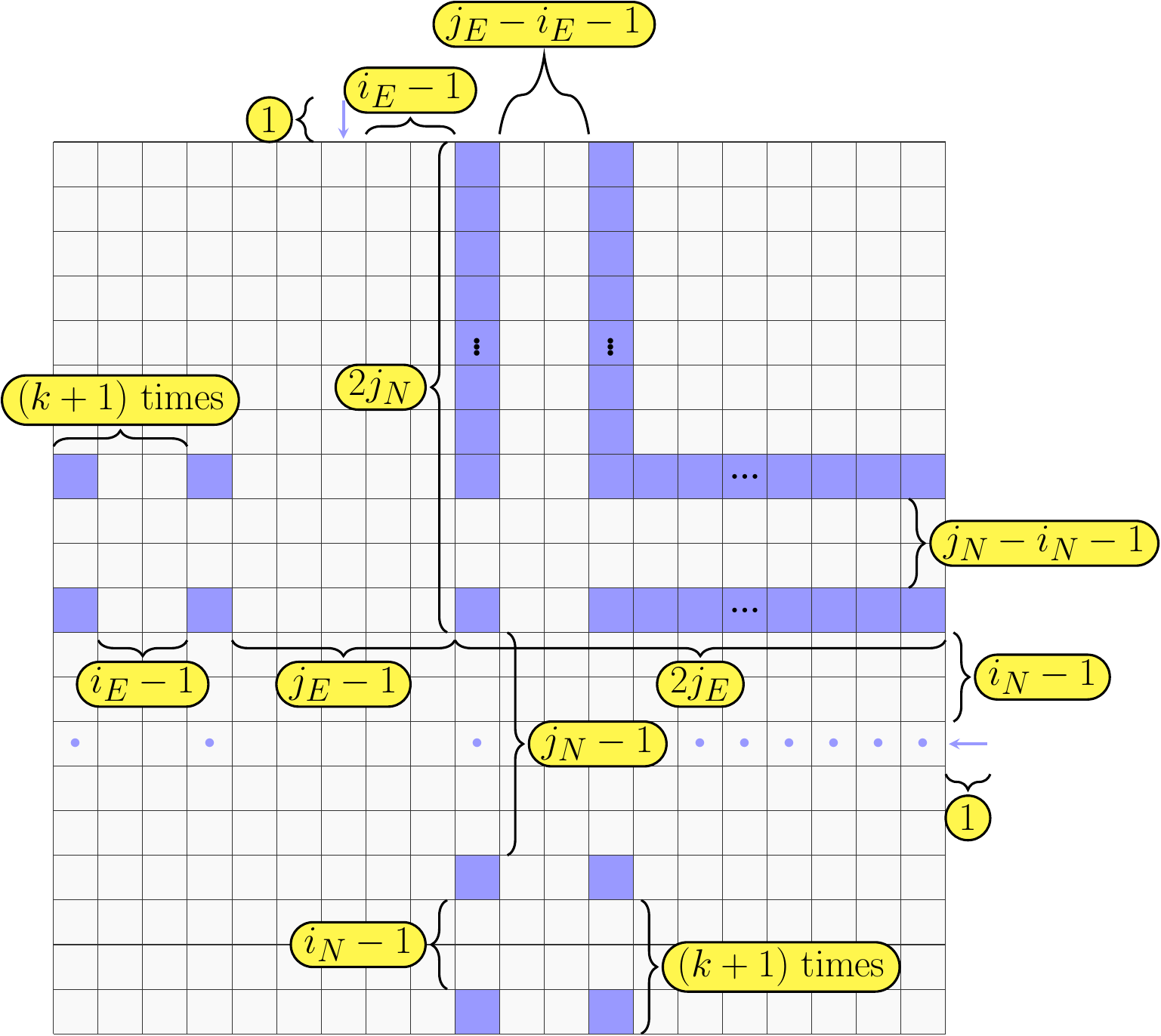}
    \caption{
      Crossover gadget for non-contiguous L-shaped neighborhoods with \(|S_E| = |S_N| = 2\).
      Top: from north to south.
      Bottom: from west to east.
      The cells that eventually go to the freezing state $+1$ are dotted.
    }
    \label{fig:non_connected_co}
\end{figure}

\FloatBarrier
\section{Conclusions}

In this article, we have analyzed the computational complexity of the \Pred
problem in freezing majority cellular automata with L-shaped neighborhoods.
We have introduced a general definition for these neighborhoods by using two
sets of natural numbers $S_E$ and $S_N$, which governs the spatial influence
of each cell along the automaton.

Our results reveal a contrast in the complexity of the \Pred problem depending on the size of the neighborhood.
For the smallest L-shaped neighborhood, we proved that the \Pred problem
belongs to {\NC}, offering an efficient parallel algorithm.
This highlights that under minimal settings, the problem can be efficiently
solved, specially under parallel computing environments.

On the other hand, for (the here-considered) larger L-shaped neighborhoods,
we demonstrated that the \Pred problem becomes considerably harder.
Specifically, we proved that it is {\PC} for three different families of
L-shaped neighborhoods. To achieve this, we construct explicit circuitry
to perform a reduction from the monotone circuit value problem.
These findings show the crucial role that neighborhood size plays on
the complexity of the FMCA.

In conclusion, these results provide a better understanding on how neighborhood
size and structure can affect the overall complexity of the FMCA,
contributing to a broader study of cellular automata and computational complexity.

Note that the \Pred problem remains open for the L-shaped neighborhoods
where \(|S_E| > 1\) and \(|S_N| = 1\). We conjecture that for sufficiently large
\(|S_E|\), the problem belongs to {\NC}. This intuition arises from the
expectation that, as \(S_E\) grows, the CA will resemble a one-dimensional
freezing CA.

Future directions of research could include different neighborhood geometries,
variations of the local rule or considering different updating schedules.
Practical applications of these findings in parallel computation and system
modeling are also open areas for exploration.

\section*{Acknowledgments}

\sloppy
This work received support from the following projects:
ECOS-ANID ECOS240020 (P.M.),
STIC-AMSUD ECODIST AMSUD240005 (P.M.),
FONDECYT 1230599 (P.M.),
FONDECYT 1250984 (E.G.),
Centro de Modelamiento Matemático FB210005 (P.M. and E.G.),
BASAL funds for centers of excellence from ANID-Chile (P.M. and E.G.)
ANID-MILENIO-NCN2024 103 (P.M. and E.G.),
ANR-24-CE48-7504 ALARICE (K.P.),
HORIZON-MSCA-2022-SE-01 101131549 ACANCOS (K.P.),
and STIC AmSud CAMA 22-STIC-02 (Campus France MEAE) (K.P.).

\bibliographystyle{plain}
\bibliography{refs}

\begin{thebibliography}{10}

\bibitem{concha2022complexity}
Pablo Concha-Vega, Eric Goles, Pedro Montealegre, and Mart{\'\i}n
  R{\'\i}os-Wilson.
\newblock On the complexity of stable and biased majority.
\newblock {\em Mathematics}, 10(18):3408, 2022.

\bibitem{dwork1988consensus}
Cynthia Dwork, Nancy Lynch, and Larry Stockmeyer.
\newblock Consensus in the presence of partial synchrony.
\newblock {\em Journal of the ACM (JACM)}, 35(2):288--323, 1988.

\bibitem{epstein2000computability}
Richard~L. Epstein and Walter~A. Carnielli.
\newblock {\em Computability computable functions, logic, and the foundations
  of mathematics}.
\newblock Wadsworth Publishing Co., 2000.

\bibitem{gacs1978one}
Peter G{\'a}cs, G~Kurdyumov, and Leonid~A. Levin.
\newblock One-dimensional homogeneous media dissolving finite islands.
\newblock {\em Problems of Information Transmission}, 14(3):92--96, 1978.

\bibitem{gacs2022stable}
P{\'e}ter G{\'a}cs and Ilkka T{\"o}rm{\"a}.
\newblock Stable multi-level monotonic eroders.
\newblock {\em Theory of Computing Systems}, pages 1--32, 2022.

\bibitem{galam2008sociophysics}
Serge Galam.
\newblock Sociophysics: A review of galam models.
\newblock {\em International Journal of Modern Physics C}, 19(03):409--440,
  2008.

\bibitem{galam2007role}
Serge Galam and Frans Jacobs.
\newblock The role of inflexible minorities in the breaking of democratic
  opinion dynamics.
\newblock {\em Physica A: Statistical Mechanics and its Applications},
  381:366--376, 2007.

\bibitem{goles2018complexity}
Eric Goles, Pedro Montealegre, K{\'e}vin Perrot, and Guillaume Theyssier.
\newblock On the complexity of two-dimensional signed majority cellular
  automata.
\newblock {\em Journal of Computer and System Sciences}, 91:1--32, 2018.

\bibitem{goles2013complexity}
Eric Goles, Pedro Montealegre-Barba, and Ioan Todinca.
\newblock The complexity of the bootstraping percolation and other problems.
\newblock {\em Theoretical Computer Science}, 504:73--82, 2013.

\bibitem{greenlaw1995limits}
Raymond Greenlaw, H.~James Hoover, and Walter~L. Ruzzo.
\newblock {\em Limits to parallel computation: P-completeness theory}.
\newblock Oxford University Press on Demand, 1995.

\bibitem{JaJa:1992:IPA:133889}
Joseph J\'{a}J\'{a}.
\newblock {\em An Introduction to Parallel Algorithms}.
\newblock Addison Wesley Longman Publishing Co., Inc., Redwood City, CA, USA,
  1992.

\bibitem{moore1997majority}
Cristopher Moore.
\newblock Majority-vote cellular automata, ising dynamics, and p-completeness.
\newblock {\em Journal of Statistical Physics}, 88:795--805, 1997.

\bibitem{toom1980stable}
Andrei~L Toom.
\newblock Stable and attractive trajectories in multicomponent systems.
\newblock {\em Multicomponent random systems}, 6:549--575, 1980.

\bibitem{vollmer1999introduction}
Heribert Vollmer.
\newblock {\em Introduction to circuit complexity: a uniform approach}.
\newblock Springer Science \& Business Media, 1999.

\bibitem{wang2014freezing}
Zhen Wang, Yi~Liu, Lin Wang, Yan Zhang, and Zhen Wang.
\newblock Freezing period strongly impacts the emergence of a global consensus
  in the voter model.
\newblock {\em Scientific reports}, 4(1):3597, 2014.

\bibitem{wegener1987complexity}
Ingo Wegener.
\newblock {\em The complexity of Boolean functions}.
\newblock John Wiley \& Sons, Inc., 1987.

\end{thebibliography}

\end{document}